\documentclass{article}

\usepackage{microtype}
\usepackage{graphicx}
\usepackage{subcaption}
\usepackage{dsfont}
\usepackage{booktabs} 

\usepackage{paralist}
\usepackage{scrextend} 
\usepackage{needspace} 
\usepackage[dvipsnames]{xcolor}

\usepackage{hyperref}

\usepackage[accepted]{icml2026}

\usepackage{nicefrac}
\usepackage{amssymb}
\usepackage{enumerate}
\usepackage{caption}
\usepackage{subcaption}
\usepackage{float}
\usepackage{mathtools}
\usepackage{amsthm}
\usepackage{amsfonts}
\usepackage{amsmath}
\usepackage{stmaryrd}
\usepackage{xspace}
\usepackage{environ}
\usepackage{tabularx}
\usepackage[sort&compress,nameinlink,noabbrev,capitalize]{cleveref}

\Crefname{theorem}{Theorem}{Theorems}
\crefname{theorem}{Thm.}{Thms.}
\Crefname{proposition}{Proposition}{Propositions}
\crefname{proposition}{Prop.}{Props.}
\Crefname{section}{Section}{Sections}
\crefname{section}{Sec.}{Secs.}
\Crefname{figure}{Figure}{Figures}
\crefname{figure}{Fig.}{Figs.}

\usepackage{todonotes}

\theoremstyle{plain}
\newtheorem{theorem}{Theorem}[section]

\newtheorem{lemma}[theorem]{Lemma}
\newtheorem{corollary}[theorem]{Corollary}
\newtheorem{conjecture}[theorem]{Conjecture}
\theoremstyle{definition}

\theoremstyle{remark}

\newcommand{\additivename}{+}

\newcommand{\BNSL}{\textsc{BNSL}}
\newcommand{\PT}{\textsc{PT}}

\newcommand{\BNSLdeg}{\BNSL{}${}_{\le}$}
\newcommand{\PTdeg}{\PT{}${}_{\le}$}

\newcommand{\PTdegadd}{\PT{}${}_{\le}^{\additivename}$}
\newcommand{\PTcomp}{\PT{}${}_{\textup{comp}}$}

\newcommand{\opt}{D^{\text{opt}}}

\newcommand{\OO}{\mathcal{O}}
\DeclareMathOperator{\sz}{\mathrm{sz}}

\DeclareMathOperator{\tw}{\mathsf{tw}}

\usepackage[algo2e,noend,ruled,linesnumbered]{algorithm2e}

\newenvironment{customquote}
  {\list{}{\leftmargin=5pt\rightmargin=0pt}\item\relax}
  {\endlist}

\newcommand{\prob}[6]{%
  \needspace{3\baselineskip}
  \begin{customquote}
    \begin{labeling}{#6}%
    \item[#1]
    \item[\emph{#2}]#3
    \item[\emph{#4}]#5
    \end{labeling}%
  \end{customquote}%
}

\newcommand{\probdef}[3]{\prob{#1}{Instance:}{#2}{Question:}{#3}{as}}

\usepackage{etoolbox}
\newcommand{\appsymb}{$\bigstar$}
\newcommand{\appref}[1]{{\appsymb}}
\newif\ifarxiv
\arxivfalse

\newcommand{\appendixsection}[1]{%
	\ifarxiv{}\else{}
	\gappto{\appendixProofText}{\section{Additional Material for \Cref{#1}}\label{app:#1}}
	\fi{}
}

\icmltitlerunning{Exact and Approximate Algorithms for Polytree Learning}

\begin{document}

\twocolumn[
  \icmltitle{Exact and Approximate Algorithms for Polytree Learning}



  \icmlsetsymbol{equal}{*}

  \begin{icmlauthorlist}
    \icmlauthor{Juha Harviainen}{equal,hel}
    \icmlauthor{Frank Sommer}{equal,jena}
    \icmlauthor{Manuel Sorge}{equal,vienna}
  \end{icmlauthorlist}

  \icmlaffiliation{hel}{Department of Computer Science, University of Helsinki, Helsinki, Finland}
  \icmlaffiliation{jena}{Institute of Computer Science, Friedrich-Schiller Universit\"at Jena, Jena, Germany}
  \icmlaffiliation{vienna}{Institute of Logic and Computation, TU Wien, Vienna, Austria}

  \icmlcorrespondingauthor{Juha Harviainen}{juha.harviainen@helsinki.fi}
  \icmlcorrespondingauthor{Frank Sommer}{frank.sommer@uni-jena.de}
  \icmlcorrespondingauthor{Manuel Sorge}{manuel.sorge@ac.tuwien.ac.at}

  \icmlkeywords{belief networks, local search, hardness of approximation, submodular functions, parameterized algorithms}

  \vskip 0.3in
]



\printAffiliationsAndNotice{\icmlEqualContribution}

\begin{abstract}
  Polytrees are a subclass of Bayesian networks that seek to capture the conditional dependencies between a set of $n$ variables as a directed forest and are motivated by their more efficient inference and improved interpretability. Since the problem of learning the best polytree is NP-hard, we study which restrictions make it more tractable by considering for example in-degree bounds, properties of score functions measuring the quality of a polytree, and approximation algorithms. We devise an algorithm that finds the optimal polytree in time $\OO\big((2+\epsilon)^n\big)$ for arbitrarily small~$\epsilon > 0$ and any constant in-degree bound $k$, improving over the fastest previously known algorithm of time complexity~$\OO\big(3^n\big)$. We further give polynomial-time algorithms for finding a polytree whose score is within a factor of $k$ from the optimal one for arbitrary scores and a factor of $2$ for additive ones. Many of the results are complemented by (nearly) tight lower bounds for either the time complexity or the approximation factors.
\end{abstract}

\section{Introduction}

Bayesian networks (BNs) are probabilistic graphical models that represent the set of variables and their conditional independencies as a directed acyclic graph (DAG).
Widely adopted score-based approaches for learning an optimal structure for the network from data make simplifying assumptions that decompose the posterior probabilities of structures into a product of node-wise \emph{local scores} that depend only on the node and its set of parents in the DAG. 
Even with these assumptions, however, structure learning remains NP-hard \cite{Chickering95}. Further, performing inference is $\#$P-hard \cite{Roth96} in general.
Hence, in practice often heuristics are used.
One prominent heuristic is K2~\cite{CH92}:
we are given a node ordering, and for each node in that ordering the highest scoring parent set is added while acyclicity is preserved.

These obstacles have motivated studies of constrained classes of BNs that provide a trade-off between the expressiveness and the efficiency of computation. \emph{Polytrees} are BNs whose underlying undirected graph is a forest that are more interpretable and inference takes only polynomial time \cite{Pearl89book}. 
In terms of applications, polytrees are used in image-segmentation for microscopy data~\cite{FGLMJF19}.
However, structure learning remains NP-hard for them even when the nodes can have only $2$ parents \cite{Dasgupta99}, with the fastest known algorithms taking exponential time in the number of nodes \cite{Gruttemeier21} except for the special case of \emph{branchings} where each node is allowed a single parent \cite{Edmonds67}. 
We discover a novel dynamic-programming algorithm for learning the optimal polytree with dynamic programming in time $3^n n^{\OO(1)}$ under mild assumptions, matching the time complexity of \citet{Gruttemeier21}.
However, under any constant in-degree bound, we show that the running time can be improved to~$\OO\big((2 + \epsilon)^n\big)$ for arbitrarily small $\epsilon > 0$ by characterizing a subset of dynamic programming states that are sufficient for identifying the optimal structure. We further prove the time complexity to be near-optimal under the set cover conjecture (SCC) of \citet{Cygan16}.

Because of the inherent hardness of learning the structure of an optimal polytree, the rest of the paper focuses in studying the hardness of approximating it, inspired by recent works on approximation algorithms for general BNs \cite{Kundu24approx,Ziegler08}. 
More precisely, \citet{Kundu24approx} provide so-called FPT-approximations for learning BNs, and \citet{Ziegler08} studies learning BNs with a maximal parent set size~$k$.
Interestingly, the algorithm of \citet{Ziegler08} is a variant of the K2 heuristic: there in each step the highest scoring parent set without violating acyclicity is added (hence no node ordering is used).

We show that without any further assumptions about problem instances, no non-trivial polynomial-time approximation guarantee for polytree learning is possible unless P$=$NP. However, if the in-degree of the polytree is again bounded by a constant $k$, then we can get a $(k+1)$-approximation of the optimal structure, which we also prove to be optimal up to logarithmic factors in $k$ under the Unique Games Conjecture (UG) of \citet{Khot02}.
This is somewhat surprising since, as mention above, learning BNs with in-degree bound also admits a factor $k$~approximation~\cite{Ziegler08}.
One could have hoped that the much simpler structure of polytrees allows for a better approximation guarantee.
We show, however, that this is not possible.
Interestingly, our algorithm is very similar to the one of~\cite{Ziegler08}: in each step we add the highest scoring parent set, but now we ensure that the underlying graph is acyclic while~\citet{Ziegler08} ensured acyclicity of the directed graph.

Another simplifying assumption motivated by heuristics for BNs \cite{GanianK21,Scanagatta16} that we study are \emph{additive} local scores for which each possible arc has a score and the score of a DAG is the sum of the arc-wise scores. 
Recently \citet[Theorem~18]{GK26} showed that Polytree Learning with additive scores can be done in polynomial time. 
Nonetheless we achieve a $2$-approximation of the optimal polytree in that case with the same simple greedy algorithm which always adds the best remaining parent set without violating the polytree constraint.
We believe that this result is interesting despite the fact that an optimal solution can be computed in polynomial time since this shows that this simple greedy algorithm can achieve much better approximation guarantees than~$k$ if we impose additional constraints on the polytree.
In other words, we believe there are scoring functions which are more general than being additive such that this simple greedy algorithm (or a slight adaption of it) yields a constant factor approximation.

Finally, we provide a $2q$-approximation for the variant \PTcomp{} in which each component is only allowed to have at most $q$~arcs.
For a summary of our algorithms and lower bounds, see \Cref{tbl:summary}.

\begin{table*}
    \centering
    \caption{Summary of our algorithms for running time and approximation factors for polytree learning (PT). The subscript `$\le$' means that we have an in-degree bound $k$, the superscript `$+$' means additive local scores, and `$\textup{comp}$' means component size bounded by $q$. } \label{tbl:summary}
    \begin{tabular}{ccccc}
        \toprule
        Problem & Time & Approximation factor & Theorem & Comments\\
        \midrule
        \PT & $3^n |\mathcal{I}|^{\OO(1)}$ & Exact & \Cref{thm-new-polytree} & Same complexity as \citet{Gruttemeier21}\\
        \PTdeg & $(2+\epsilon)^n |\mathcal{I}|^{\OO(1)}$ & Exact & \Cref{thm-exact-faster} & Any $\epsilon > 0$, near-optimal complexity under SCC\\
        \PTdeg & $|\mathcal{I}|^{\OO(1)}$ & $k+1$ & \Cref{thm:polytree-k-approx} & Optimal up to logarithmic factors in $k$ under UG\\
        \PTdegadd & $|\mathcal{I}|^{\OO(1)}$ & $2$ & \Cref{thm:ptadd-two-approx} & Proven to be in P \cite{GK26}\\
        \PTcomp & $|\mathcal{I}|^{\OO(1)}$ & $2q$ & \Cref{thm-pt-comp} & Optimal up to logarithmic factors in $q$ under UG\\
        \bottomrule
    \end{tabular}
\end{table*}

\paragraph{Related work.}

A closely related optimization problem of minimizing the sum of conditional entropies over the nodes was studied by \citet{Dasgupta99}. Their approximability results, however, do not generalize to the score-based structure learning setting where we instead maximize the sum of local scores.

Alternative structural constraints for learning the optimal structure have been considered especially through the lens of parameterized complexity (see, for example, \citet{Gruttemeier22parameterized,Ordyniak13,Kundu24exact,Gruttemeier21local}, and the references therein); also see \citet{G26} for a survey. 
In particular for polytrees, \citet{Gaspers15} study polytrees that are close to a branching, \citet{Gruttemeier21} consider polytrees with bounded in-degree and bounded number of nodes with non-empty parent sets, and \citet{GanianK21} briefly consider parameterizations for polytrees with additive local scores.

\section{Preliminaries}

Let $D$ be a directed acyclic graph (DAG) over~$n$ nodes $V=V(D)$ with a set of arcs $A=A(D)$. The \emph{skeleton} of $D$ is the undirected graph obtained by removing the orientations of its edges. A DAG is a \emph{polytree} if its skeleton is a forest. With mild abuse of terminology, we may also call the arc set $A$ of a polytree if its skeleton is acyclic. A polytree is \emph{connected} if its skeleton is a tree. The set of parents of a node $v \in V$ is denoted by $D_v$.

In the score-based approach to learning BNs, we are given access to a \emph{local score function} $f_v \colon 2^{V \setminus \{v\}} \rightarrow \mathbb{R}$ for each node $v \in V$ that assigns a value for each possible parent set of $v$, typically measuring its quality. The \emph{score of a DAG} is the sum of the local scores,
$f(D) \coloneqq \sum_{v \in V} f_v(D_v)$.

The local score functions need to be described in the input, and without any constraints this would make the input size exponential in $n$. Thus, unless otherwise stated, we adopt the commonly-used \emph{non-zero encoding} of the input, where the input contains a family $\mathcal{F}_v$ of potential parent sets for all $v \in V$ and their associated local scores $f_v(S)$ for all $S \in \mathcal{F}_v$. The unspecified local scores are assumed to be $-\infty$ instead of $0$ despite the nomenclature because, roughly speaking, the scores correspond to the probabilities of the parent sets of which logarithms are taken to improve numerical accuracy. 
We will further assume that the local scores are normalized in such a way that $f_v(\emptyset) = 0$ for all nodes $v$ by shifting all local scores of each node by an appropriate amount, similarly to \citet{Kundu24approx} and \citet{Ziegler08}. This ensures that the empty graph has score~$0$, making the discussion of approximation algorithms more meaningful.
Moreover, this shifting does not affect which structure is the optimal one.

Defining $\mathcal{F} \coloneqq \{ (v, S) \colon v \in V, S \in \mathcal{F}_v \}$ to be the union of all potential parent sets, the size of the input is $|\mathcal{F}|$ up to polynomial factors in $n$.
We use $|\mathcal{I}|$ to denote the length of the encoding of the input.
We now define our main problem.

\probdef{Polytree Learning (PT)}
{Node set $V$, local score functions $f_v$ for $v \in V$.}
{Polytree $D = (V, A)$ with maximum score.}

We also study variants of the above problems under various restrictions. A local score function~$f_v$ is \emph{additive} if we have that~$f_v(S) = \sum_{u \in S} f_v(\{u\})$.
When dealing with additive scores, we assume the input to contain the $\OO(n^2)$ values $f_v(\{u\})$ for all distinct $v, u \in V$.
The score function is \emph{additive} if all local scores are additive.

When we are guaranteed that the score function is additive, we add the symbol `$+$' as a superscript to the problem name. 
Similarly, if we are guaranteed that the largest parent set in the instance with a specified score has size $k$, we add a subscript `$\le$' to the problem name.

For a constant $c > 1$, an algorithm $\mathcal{A}$ for \PT{} (and its variants) is a $c$-\emph{approximation algorithm} if it always outputs a polytree $D$ whose score is within a factor $c$ from the optimal polytree $\opt{}$, that is, $f(D) \le f(\opt) \le c \cdot f(D)$.

\section{Exact Algorithms}

\citet{Gaspers15} provided a first exponential time algorithm with running time~$n^{n-2} \cdot 2^{n-1}$.
This was later improved to~$3^n |\mathcal{I}|^{\OO(1)}$~\cite{Gruttemeier21}.
We start by giving an algorithm with time complexity $3^n |\mathcal{I}|^{\OO(1)}$ for solving \PT{}, matching the complexity of \citet{Gruttemeier21}.
Afterwards, we modify our approach to yield a $(2+\epsilon)^n |\mathcal{I}|^{\OO(1)}$~time algorithm for \PTdeg{} where~$\epsilon$ is arbitrary.
Finally, we show that our running time for \PTdeg{} is almost tight, assuming standard complexity theory assumptions. 

\begin{theorem}\label{thm-new-polytree}
    \PT{} can be solved in time $3^n \cdot |\mathcal{I}|^{\OO(1)}$.
\end{theorem}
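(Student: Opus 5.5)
We set up a subset dynamic program over rooted connected polytrees. The key observation is that in a connected polytree every node $v$, together with the components hanging off it, splits into pieces: removing $v$ from the skeleton leaves subtrees, each attached to $v$ by exactly one arc, either into $v$ (a parent of $v$) or out of $v$ (a child of $v$). We define, for every $S\subseteq V$ and $v\in S$:
\begin{itemize}
\item $T[S,v]$, the maximum of $\sum_{u\in S\setminus\{v\}} f_u(D_u)$ over connected polytrees on $S$, where $v$ is a designated root whose own parent set is left unscored.
\item $P[S,v]$, the same quantity for the relaxed table used in the recursion below.
\end{itemize}
The answer is then obtained by a final set-partition DP over components: $\mathrm{OPT}[U]=\max$ over connected $C\ni\min U$ of the value of the best connected polytree on $C$ plus $\mathrm{OPT}[U\setminus C]$. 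This costs $3^n$.

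Concretely, I would define $g[S,v]$ as the best score of a polytree on $S$ that is connected, in which all nodes other than $v$ are scored but $v$'s parents lie outside $S$ and are not yet decided. The recursion processes the neighbours of $v$ one at a time. A subtree attached via an out-arc $v\to w$ contributes a term $h[R,w,v]$: the best connected polytree on $R$, rooted at $w$, where $w$'s parent set is $D_w\ni v$ and the other members of $D_w$ lie in $R$. To avoid an extra dimension, I would instead let $w$ choose a parent set $Q\in\mathcal{F}_w$ with $v\in Q$, and let each $q\in Q\setminus\{v\}$ hang its own subtree. Handling this requires partitioning $R$ among the elements of $Q$, which is again a subset convolution. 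Attaching an in-arc $u\to v$ is handled in the same way from $v$'s own parent set.

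To keep everything in $3^n$, I will organise the recursion so that each step combines two disjoint sets. The plan is to define
\[
A[S,v] = \text{best score on } S \text{ where } v \text{ is the root with its parent set unscored, and all of } S \text{ hangs from } v \text{ through out-arcs.}
\]
Then
\[
A[S,v]=\max_{w,\,R} \bigl(A[S\setminus R,v] + B[R,w,v]\bigr),
\]
and
\[
B[R,w,v]=\max_{Q\in\mathcal{F}_w,\,v\in Q} \bigl(f_w(Q) + \text{best assignment of } R\setminus\{w\} \text{ to subtrees rooted at } w \text{ and at the nodes of } Q\setminus\{v\}\bigr).
\]
The subtrees rooted at parents $q$ are again of type ``root with a designated set of out-neighbours plus its own parents''. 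Summing over pairs $(S,R)$ with $R\subseteq S$ gives $3^n$ states times a polynomial, using $|\mathcal{F}|$ for the parent-set choices.

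The correctness argument shows that every connected polytree decomposes uniquely at its root in this way, and that any combination of the pieces, glued along the single arcs, is again a polytree. The skeleton stays a tree because the pieces are disjoint and each piece is joined to the rest by one edge. The main obstacle is bookkeeping: merging multiple parents of the same node $w$ has to be charged to a sequential partition of $R$ (adding one parent's subtree at a time, with an intermediate state ``$w$ with remaining parents $Q'$ to attach''). The worry is that this might add a factor $|\mathcal{F}_w|$ to the state space. That is fine, because only polynomially many $(w,Q')$ pairs arise: $Q'$ is a suffix of some $Q\in\mathcal{F}_w$ under a fixed order. So the total running time is $3^n|\mathcal{I}|^{\OO(1)}$.
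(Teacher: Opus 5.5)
Your approach is correct in outline, but it takes a genuinely different route from the paper.

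\textbf{The paper's route.} The paper uses a single table $Q[S,T]$ with $T\subseteq S$. Here $T$ holds the nodes whose parent sets are already fixed, and $S\setminus T$ holds nodes that so far occur only as parents. A connected polytree is grown one node at a time along a DFS order of its skeleton. Each step places one whole parent set that meets the current tree in at most one node. Forests are handled by adding an auxiliary node that may be a parent of any node at no cost. The $3^n$ counts states, and every transition is polynomial. Placing a parent set merely enlarges $S$; the parents' own subtrees are grown later, when those nodes enter $T$. So the paper never has to partition a set among the members of a parent set.

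\textbf{Your route.} Your rooted-subtree decomposition has only $2^n\cdot|\mathcal{I}|^{\OO(1)}$ entries per table. The $3^n$ comes instead from enumerating splits $R\subseteq S$ when two disjoint pieces are glued along one arc. Forests are handled by a separate component-partition DP.

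\textbf{What each buys.} Your soundness argument is more local: each piece is attached by exactly one arc, so the skeleton stays a tree, and no DFS-order argument is needed. The price is several auxiliary tables, including the suffix states for attaching a node's parents one by one. The paper's formulation is lighter. More importantly, its state space is exactly what \Cref{lm-opt-dfs} prunes: keeping only entries with $|S|-|T|=\OO(k\log n)$ gives \Cref{thm-exact-faster}. In your DP the exponential cost sits in the split enumeration rather than in the states, so there is no analogous pruning in sight.

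\textbf{What you must make explicit to turn the sketch into a proof.}
\begin{itemize}
\item Drop the unused tables $T,P,g,h$.
\item Define the third table you only allude to. Let $C[R,q]$ be the best connected polytree on $R$ in which every node, including $q$, is scored and all parent sets lie in $R$. Set $C[R,q]=\max_{Q\in\mathcal{F}_q,\,Q\subseteq R\setminus\{q\}}\bigl(f_q(Q)+E[R,q,Q]\bigr)$.
\item Define the intermediate ``attach the parents in $Q'$'' table $E$ with base case $E[R,w,\emptyset]=A[R,w]$. Its step splits off a piece $R''\subseteq R\setminus\{w\}$ containing the first element $q_1$ of $Q'$, worth $C[R'',q_1]$. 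Then $B[R,w,v]=\max_{Q\in\mathcal{F}_w,\,v\in Q}\bigl(f_w(Q)+E[R,w,Q\setminus\{v\}]\bigr)$.
\item The same $C$ supplies ``the best connected polytree on $C$'' in your component DP.
\item State the evaluation order and check it is well founded: increasing $|R|$, and for fixed $R$ first $A$, then $E$ by increasing $|Q'|$, then $B$ and $C$.
\end{itemize}
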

\begin{proof}
    Note that \PT{} is easy to reduce to the problem of learning the highest-scoring connected polytree: Add one additional node $u$ whose only possible parent set is the empty set with score $0$. For the other nodes $v$, double the number of potential parent sets $S \in \mathcal{F}_v$ by defining a new potential parent set $S \cup \{u\}$ with the same local score as with $S$. For the bounded in-degree variant of the problem, we also increase the in-degree bound by one. Now, any disconnected polytree can be made a connected polytree by connecting each component to $u$ without affecting the score of the structure, and any connected polytree corresponds to a polytree of the original instance by removing $u$.
  
    For subsets of nodes $S, T \subseteq V$ with $T \subseteq S$, let $Q[S, T]$ denote the score of best connected polytree on the set of nodes $S$ such that only the nodes in $T$ can have non-empty parent sets. The optimal score is then found on $Q[V, V]$. Recall that we assumed without the loss of generality that the scores are normalized such that the empty parent set has score $0$, and therefore the score of the polytree corresponding to~$Q[S, T]$ equals the sum of the local scores of the nodes in~$T$.

    \paragraph{Initialization.} Let $Q[\{v\} \cup D_v, \{v\}] = f_v(D_v)$ for all $v \in N$ and $D_v \in \mathcal{F}_v$, that is, the connected polytree has only one non-empty parent set, which is $D_v$ for $v$. Initialize all remaining entries $Q[S, \{v\}]$ to $-\infty$.

    \paragraph{Dynamic programming.} To compute $Q[S, T]$, we pick one node $v$ in $T$ and guess its parent set $D_v$ in the optimal connected polytree $D$ for $Q[S, T]$. The recurrence used for computing the table is as follows
    \begin{equation*}  
    \begin{split}
      &Q[S, T] = \max_{v \in T} \max_{\substack{D_v \in \mathcal{F}_v \\ D_v \subseteq S}}\\
      &\quad\left\lbrace \begin{aligned}
        &f_v(D_v) + Q[S \setminus D_v, T \setminus \{v\}],\\
        &\quad\text{ if } D_v \cap T = \emptyset\\
        &f_v(D_v) + Q[S \setminus ((D_v \cup \{v\}) \setminus \{u\}), T \setminus \{v\}],\\
        &\quad\text{ if } D_v \cap T = \{u\} \text{ for some } u \in S,\\
        &-\infty, \text{ otherwise.}
      \end{aligned}\right.
    \end{split}
    \end{equation*}

    \paragraph{Correctness.} Consider an optimal connected polytree $D^{\textrm{opt}}$. Pick an arbitrary root of its skeleton and consider the permutation~$\sigma$ of its nodes where $\sigma(i)$ is the $i$-th node of $D^{\textrm{opt}}$ that a depth-first search~(DFS) would visit if it were performed on the skeleton. Let $T_i = \{\sigma(1), \dots, \sigma(i)\}$ and
    \[ S_i = T_i \cup D^{\textrm{opt}}_{\sigma(1)} \cup \dots \cup D^{\textrm{opt}}_{\sigma(i)}\,,\]
    where $D^{\textrm{opt}}_{\sigma(i)}$ is the parent set of the node~$\sigma(i)$ in the connected polytree~$D^{\textrm{opt}}$.
    Intuitively, $T_i$ is the set of the first $i$ nodes that the DFS visits, and $S_i$ includes the parents of the nodes in $T_i$ in the connected polytree. 
    
    Note that the subgraph of $D^{\textrm{opt}}$ induced by every $S_i$ is a connected polytree by the properties of DFS. Also recall that the parent set of a node is allowed to be empty in the recurrence.
    We prove by induction that $Q[S_i, T_i]$ is at least the score of the subgraph of $D^{\textrm{opt}}$ induced by $S_i$. This clearly holds for the base case, since
    \[ Q[S_1, T_1] = Q[\{\sigma(1)\} \cup D^{\textrm{opt}}_{\sigma(1)}, \{\sigma(1)\}] = f_{\sigma(1)}(D^{\textrm{opt}}_{\sigma(1)}). \]

    For $Q[S_i, T_i]$, there are two possible cases depending on if $\sigma(i)$ is in $S_{i - 1}$ or not. For conciseness, let $v = \sigma(i)$. If~$v \in S_{i - 1}$, then we know that all parents of $v$ must be from $V \setminus S_{i - 1}$, since $S_{i - 1}$ induces a connected polytree on $D^{\textrm{opt}}$ and $v$ must be a child of some node in $T_{i - 1}$. Thus, $D^{\textrm{opt}}_{v} \cap S_{i - 1} = \emptyset$. Again, it may be that $D^{\textrm{opt}}_{v} = \emptyset$. We therefore get that $Q[S_i, T_i]$ is at least
    \begin{align*}
      &f_{v}(D^{\textrm{opt}}_{v}) + Q[S_{i} \setminus D^{\textrm{opt}}_{v}, T_i \setminus \{{v}\}]\\
      =\ &f_{v}(D^{\textrm{opt}}_{v}) + Q[S_{i - 1}, T_{i - 1}]\,.
    \end{align*}
    
    Similarly, if $v \not\in S_{i - 1}$, then we know that $D^{\textrm{opt}}_v$ intersects~$S_{i - 1}$ by exactly one node, say $u$, since otherwise $D^{\textrm{opt}}$ would not be a connected polytree. By the properties of DFS, we also have that $u \in T_{i-1}$. Again, $Q[S_i, T_i]$ is bounded from above by
    \begin{align*}
      &f_{v}(D^{\textrm{opt}}_{v}) + Q[S_i \setminus ((D^{\textrm{opt}}_{v} \cup \{v\}) \setminus \{u\}), T_{i} \setminus \{v\}]\\
      =\ &f_{v}(D^{\textrm{opt}}_{v}) + Q[S_{i - 1}, T_{i - 1}]\,.
    \end{align*}
    It follows that $Q[V, V] = Q[S_n, T_n] \ge f(D^{\mathrm{opt}})$.

    It remains to argue that the computation paths of dynamic programming resulting with score that is not $-\infty$ all correspond to connected polytrees. We prove again by induction that the value of $Q[S, T]$ corresponds to a connected polytree on the node set~$S$. By the initialization, this clearly holds for the base case of sets $S \subseteq V$ and $T = \{v\}$ for some $v \in V$. Assume now that the claim holds for all sets $S$ and proper non-empty subsets $T'$ of $T$.
    
    If $D_v \cap T = \emptyset$, then consider the optimal connected polytree corresponding to the table entry $Q[S \setminus D_v, T \setminus \{v\}]$. We have that $v$ is a node of that tree and none of the nodes $D_v$ are. Therefore it remains a connected polytree after adding the nodes $D_v$ as the parents of $v$.
    
    If $D_v \cap T = \{u\}$ for some $u \in V$, then consider the optimal connected polytree corresponding to
    \[ Q[S \setminus ((D_v \cup \{v\}) \setminus \{u\}), T \setminus \{v\}]\,. \]
    Now, $v$ is not a node of that connected polytree, but adding the parent set $D_v$ to $v$ connects it to the connected polytree. Since $D_v$ intersects~$S \setminus ((D_v \cup \{v\}) \setminus \{u\})$ only on one node, the resulting tree is still a connected polytree. Therefore, we have shown that the dynamic programming recurrence considers only connected polytrees and that any connected polytree can be considered.

    \paragraph{Running time.} There are $3^n$ ways of choosing sets $S$ and~$T$ with $T \subseteq S \subseteq V$, and computing each dynamic programming state takes $|\mathcal{F}| \cdot n^{\OO(1)}$ time.
\end{proof}

The node ordering in \Cref{thm-new-polytree} was not important. 
We now show that there is a specific node ordering which yields a helpful property which will be crucial in obtaining a lower running time for \PTdeg{}.

\begin{lemma}\label{lm-opt-dfs}
  For every connected polytree $D$, there exists a permutation~$\sigma$ of nodes such that $|S_i| - |T_i| = \OO(k \log n)$ for all $i \in [n]$, where $T_i \coloneqq \{\sigma(1), \dots, \sigma(i)\}$ and
  \[ S_i \coloneqq T_i \cup D_{\sigma(1)} \cup \dots \cup D_{\sigma(i)}\,.\]
\end{lemma}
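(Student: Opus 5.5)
We construct $\sigma$ recursively using centroid decomposition of the skeleton, combined with a DFS-like "finish a subtree before moving on" order. Here $k$ denotes the maximum in-degree of $D$. Observe that $S_i \setminus T_i$ consists of the \emph{pending} nodes: parents of already processed nodes that are not yet processed themselves. The goal is to keep the number of pending nodes at $\OO(k\log n)$ at every moment.

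\textbf{Recursive procedure.} Let $\mathrm{Proc}(C)$ process a connected subtree $C$ of the skeleton. It picks a centroid $c$ of $C$, so every component of $C-c$ has at most $|C|/2$ nodes, and outputs $c$ first. Then it processes the components of $C - c$ one at a time, recursively, finishing each before starting the next. Call the component containing $c$'s parents (there are at most $k$ of them) the \emph{parent components}; order them last, after all components reachable only through children of $c$. The top-level call is $\mathrm{Proc}(V)$, and the recursion depth is $\OO(\log n)$.

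\textbf{Counting pending nodes.} Fix a time $i$ and look at the current chain of active recursive calls $C_0 \supset C_1 \supset \dots \supset C_d$ with $d=\OO(\log n)$. A pending node $p$ is a parent of some processed node $w$ and is not yet processed, so the skeleton edge $wp$ crosses from processed to unprocessed nodes.

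Let $w$ be a centroid at level $j$, and let its component $C_j$ be the active one. The components of $C_j - c_j$ that are already fully processed contain none of $w$'s pending parents. The components not yet started contain at most $k$ neighbours of $c_j$ that are parents of $c_j$. So each active level contributes at most $k$ pending nodes through its centroid, giving $\OO(k\log n)$ in total.

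It remains to handle edges from processed nodes inside completed recursive calls. These do not arise: when a component $C'$ has been fully processed, every skeleton neighbour of $C'$ outside $C'$ lies in the parent component. Because the skeleton is a tree, each component of $C_j - c_j$ is attached to the outside of $C_j$ only through $c_j$ and through the boundary edges of $C_j$ itself. The boundary of $C_j$ is a set of edges to earlier centroids, namely at most one edge per ancestor level. Therefore the edges leaving processed nodes go to ancestor centroids $c_0,\dots,c_{j-1}$, which are already processed, to $c_j$, or to unprocessed parts. In the last case the tail node is a centroid and has already been counted.

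To make this rigorous, I would prove the following invariant by induction over the recursion: at any time, every crossing edge from a processed node to an unprocessed node has its processed endpoint among the active centroids $c_0,\dots,c_d$. Processed non-centroid nodes lie inside finished subcalls, and the boundary of a finished component consists only of edges to its own ancestor centroids, which are processed. With the invariant, the number of pending nodes is at most $\sum_j |D_{c_j}| \le k(d+1) = \OO(k\log n)$.

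\textbf{Main obstacle.} The key step is checking that the boundary of a finished recursive component consists only of edges to already processed centroids. This holds because each component of $C - c$ is adjacent to $c$ by exactly one edge, and otherwise only to the boundary of $C$, which in turn leads only to ancestors. Ordering the parent components last is then unnecessary; any order works, and I would drop that choice if it complicates the proof.

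Note that $\sigma$ need not be a DFS order. This is compatible with the recurrence of \Cref{thm-new-polytree} only if every node $\sigma(i)$ is adjacent to $S_{i-1}$ or lies in it. That property holds if, inside each component $C'$, we first output the node of $C'$ adjacent to $c$ and only then recurse on $C'$. Doing so costs one extra node per level, which is still $\OO(\log n)$. I would verify that this adjusted order preserves the bound.
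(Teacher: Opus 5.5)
Your centroid-decomposition argument is correct for the lemma as literally stated, and it takes a genuinely different route from the paper. Your key invariant is that, at any moment, the only processed nodes with an unprocessed neighbour are the centroids $c_0,\dots,c_d$ of the active calls. This holds because a finished component is adjacent only to its own centroid and to ancestor centroids. Since the recursion depth is logarithmic, you get $|S_i|-|T_i|\le k(\log_2 n+1)$ directly.

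The paper instead runs a DFS on the skeleton that always enters the smallest unexplored subtree. A processed $v$ with a pending parent must lie on the DFS stack with an unexplored child $u$. Meanwhile the search is inside another child subtree rooted at $w$ with $\sz(w)\le\sz(u)$, so $\sz(w)<\sz(v)/2$. Hence only $\OO(\log n)$ stack nodes can carry pending parents, even though the stack itself may be long.

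What the paper's choice buys is that $\sigma$ is a DFS order. Every $\sigma(i)$ is adjacent to its DFS parent in $T_{i-1}$, so every $S_i$ induces a connected polytree. Each step then falls into one of the two cases of the recurrence in \Cref{thm-new-polytree}: either $v\in S_{i-1}$, or $D_v\cap S_{i-1}=\{u\}$ with $u\in T_{i-1}$. The analysis in \Cref{thm-exact-faster} silently relies on this. Your order lacks this property, since $c_1$ can lie far from $c_0$.

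The repair in your last paragraph does not restore connectivity. After you emit the neighbour $a$ of $c$ in a component $C'$, the centroid of $C'$ need not touch $S_{i-1}$. For instance, if $C'$ is a long path starting at $a$, its centroid lies in the middle, and neither it nor its parents are in $S_{i-1}$. Then $S_i$ is disconnected, and the recurrence cannot reach this state.

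The obvious alternative is to emit the whole path from $c$ to the next centroid. That fixes connectivity, but it processes non-centroid nodes whose off-path parents become pending. Take a caterpillar whose spine nodes each have a leaf parent: this already gives $\Theta(n)$ pending nodes, so your invariant breaks. Your proof settles the lemma, but if you want the permutation to be usable in the dynamic program, you should switch to the smallest-subtree-first DFS.
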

\begin{proof}
  Consider a DFS on the skeleton of $D$ where the DFS always leaves a node towards the smallest unexplored subtree (with ties broken arbitrarily), and let $\sigma$ be the permutation of the nodes describing the order in which DFS visits them. Denote the size of the subtree of $D$ rooted at a node $v$ by $\sz(v)$. Suppose we have run the DFS for an arbitrary number of~$i$ time steps, i.e., the search has visited nodes $\sigma(1), \sigma(2), \dots, \sigma(i)$.
  We next argue that there are at most $\OO(\log n)$ nodes $v \in T_{i-1}$ with $D_v \not\subseteq T_i$. The lemma then follows from that and the in-degree bound.

  Observe that if $D_v \not\subseteq T_i$, then there is at least one unexplored subtree rooted at an unexplored neighbor~$u$ of the node $v$. Therefore, the DFS is currently exploring the subtree rooted at some other neighbor~$w$ of $v$. Consequently,~$\sz(v) > \sz(u) + \sz(w) \ge 2 \cdot \sz(w)$, meaning that the subtree rooted at $w$ has fewer than half of the nodes of the subtree rooted at $\sz(v)$. Hence, there can be at most~$\OO(\log n)$ nodes with $D_v \not\subseteq T_i$.

  The bound is tight up to constant factors, since it is achieved by a $k$-ary tree of $n$ nodes.
\end{proof}

Perhaps surprisingly, the previous algorithm can be modified to solve \PTdeg{} in almost time $\OO(2^n)$ by avoiding redundant computation paths in the dynamic programming.

\begin{theorem}\label{thm-exact-faster}
    \PTdeg{} can be solved in time $(2+\epsilon)^n |\mathcal{I}|^{\OO(1)}$ for any $\epsilon > 0$.
\end{theorem}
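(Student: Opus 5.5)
We reuse the dynamic program from the proof of \Cref{thm-new-polytree}, but we only compute the states that the permutation of \Cref{lm-opt-dfs} can reach. We first apply the reduction to connected polytrees, which raises the in-degree bound to $k+1$ and adds one node; both changes are harmless. By \Cref{lm-opt-dfs}, applied to an optimal connected polytree $\opt$, there is an ordering $\sigma$ with $|S_i \setminus T_i| \le c (k+1)\log n$ for every $i$, for some absolute constant $c$. Rerunning the correctness argument of \Cref{thm-new-polytree} along $\sigma$ shows that the chain $Q[S_1,T_1],\dots,Q[S_n,T_n]$ stays inside the restricted family
\[
\mathcal{Q} \coloneqq \{(S,T) : T \subseteq S \subseteq V,\ |S \setminus T| \le c(k+1)\log n\}.
\]
Each recurrence step only refers to smaller pairs, and from $(S_{i-1},T_{i-1})$ it moves to $(S_i,T_i)$, which is again in $\mathcal{Q}$. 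So we restrict the table to $\mathcal{Q}$ and treat every entry outside $\mathcal{Q}$ as $-\infty$. This does not affect soundness, because the soundness argument in \Cref{thm-new-polytree} only used the recurrence and never the full table. The lower bound $Q[V,V]\ge f(\opt)$ also survives, because the witnessing chain lies entirely in $\mathcal{Q}$. Hence $Q[V,V]=f(\opt)$.

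For the running time, the number of states is
\[
|\mathcal{Q}| \;\le\; 2^n \cdot \sum_{j \le c(k+1)\log n} \binom{n}{j} \;\le\; 2^n\, n^{\OO(k\log n)} \;=\; 2^n\, 2^{\OO(k \log^2 n)},
\]
since we choose $T$ freely and then choose $S\setminus T$ among at most $n$ nodes. Each state costs $|\mathcal{F}|\, n^{\OO(1)}$ time, and the states can be enumerated directly by listing $T$ and then the small set $S\setminus T$. For constant $k$ and any $\epsilon>0$, the factor $2^{\OO(k\log^2 n)}$ is $2^{o(n)}$, so it is at most $(1+\epsilon/2)^n$ for all large $n$. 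The total time is therefore $(2+\epsilon)^n |\mathcal{I}|^{\OO(1)}$, and small $n$ is absorbed into the constant.

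The main obstacle is making sure that restricting the table does not break the induction. The lower-bound chain must use exactly the states $(S_i,T_i)$ of \Cref{lm-opt-dfs}, and the subproblem the recurrence queries at step $i$ must be exactly $(S_{i-1},T_{i-1})$ in both cases of the recurrence. This holds because the permutation of \Cref{lm-opt-dfs} is itself a DFS order, which is the only property the correctness proof of \Cref{thm-new-polytree} used.

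We also have to check that \Cref{lm-opt-dfs} really bounds $|S_i\setminus T_i|$. The lemma's argument counts nodes $v\in T_{i-1}$ with $D_v\not\subseteq T_i$, and it applies to both kinds of neighbours in the skeleton: a pending parent of $v$ heads an unexplored subtree exactly as a child does. So $S_i\setminus T_i$ is covered by the parent sets of $\OO(\log n)$ nodes, each of size at most $k+1$, plus the parent set of $\sigma(i)$ itself. That gives the $\OO(k\log n)$ bound.
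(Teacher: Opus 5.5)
Your proposal is correct and is essentially the paper's proof: you restrict the dynamic program of \Cref{thm-new-polytree} to pairs with $|S\setminus T| = \OO(k\log n)$, use the DFS order of \Cref{lm-opt-dfs} to show the witnessing chain $(S_i,T_i)$ stays in that family, and count $2^n n^{\OO(k\log n)}$ states, absorbing the subexponential factor into $\epsilon$. You also make explicit several points the paper leaves implicit, namely the in-degree increase from the connecting-node reduction, why soundness survives the restriction, and how to enumerate the restricted states, and you handle each of them correctly.
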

\begin{proof} By \Cref{lm-opt-dfs}, for every polytree~$D$ there exists $\sigma$ such that $|S_i| - |T_i| = \OO(k \log n)$ for all~$i$. By analogous analysis to \Cref{thm-new-polytree}, we can show that $Q[V, V] \ge f(D^\mathrm{opt})$ even if we only consider dynamic programming table entries $Q[S, T]$ with $|S| - |T| = \OO(k \log n)$. Therefore, there are only $2^n \cdot n^{\OO(k \log n)}$ sets $S$ and $T$ to consider. Further, $\mathcal{F}$ has at most $\OO(n^k)$ sets. By adding an arbitrarily small $\epsilon > 0$ to the base of the exponent, all subexponential factors vanish.
\end{proof}

We will now show this complexity to be nearly optimal. In the $k'$-\textsc{Set Cover} problem, one is given an universe of size $n'$ and a family of its subsets, each of size at most $k'$. Then, one should find $t \le n'$ subsets whose union is the universe. The Set Cover Conjecture (SCC) claims that this problem cannot be solved faster than $\OO(2^{(1-\epsilon)n'})$ for any~$\epsilon > 0$.

\begin{conjecture}[Set Cover Conjecture~\cite{Cygan16}]
  For every~$\epsilon > 0$, there is a positive integer~$k'$ such that no algorithm can solve~$k'$-\textsc{Set Cover} in time $\OO(2^{(1-\epsilon)n'})$.
\end{conjecture}

The $k'$-\textsc{Set Partitioning} problem similarly asks if one can find $t$~\emph{disjoint} subsets whose union is the universe. This problem has the same lower bound under SCC, since one can reduce from $k'$-\textsc{Set Cover} by including all $2^{k'}$ subsets of each subset of the universe \citep{Cygan16}. 

We now use the SCC to show that our $(2+\epsilon)^n|\mathcal{I}|^{\OO(1)}$~time algorithm for \PTdeg{} (\Cref{thm-exact-faster}) is essentially tight.

\begin{figure}[t]
\centering
\includegraphics[width=0.48\textwidth]{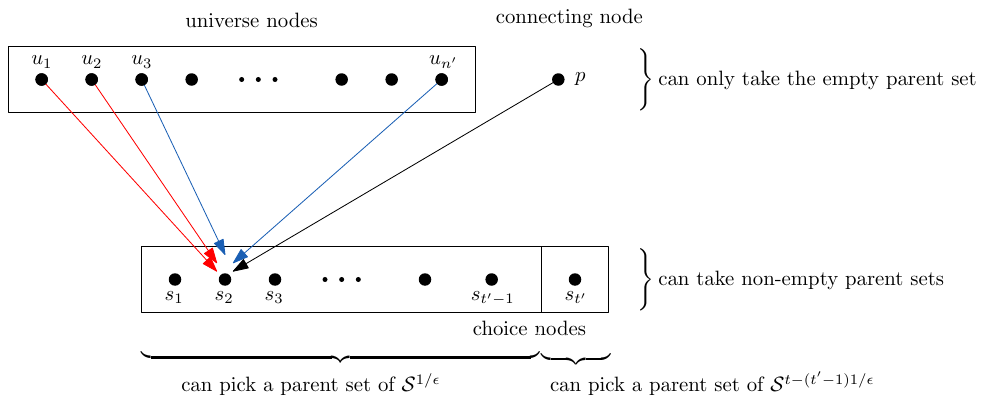 }
\caption{
A schematic visualization of the reduction provided in \Cref{thm-scc-bound-pt}.
Assume that~$1/\epsilon=2$.
Then the choice node~$s_2$ gets a parent set of~$\mathcal{S}^{1/\epsilon}=\mathcal{S}^{2}$ which consists of the connecting node~$p$ and the two sets~$\{u_1, u_2\}$ (in \textcolor{red}{red}) and~$\{u_3, u_{n'}\}$ (in \textcolor{blue}{blue}) of the $k'$-\textsc{Set Partitioning} instance and this gives score~2.
}
\label{fig-scc}
\end{figure}

\begin{theorem}
\label{thm-scc-bound-pt}
  Under SCC, for any~$\epsilon > 0$, there is a constant maximum parent set size~$k$ such that no algorithm can solve \PTdeg{} in time $2^{(1-\epsilon)n} |\mathcal{I}|^{\OO(1)}$. Further under SCC, no algorithm can solve \PT{} in time $2^{(1-\epsilon)n} |\mathcal{I}|^{\OO(1)}$ for any~$\epsilon > 0$.
\end{theorem}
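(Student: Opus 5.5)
We reduce from $k'$-\textsc{Set Partitioning}, which by the remark above has the same SCC lower bound as $k'$-\textsc{Set Cover}. The plan follows \Cref{fig-scc}.

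\textbf{Construction.} Given a universe $U=\{u_1,\dots,u_{n'}\}$ and a family $\mathcal{S}$ of subsets of size at most $k'$, fix $c=\lceil 1/\epsilon\rceil$. We build a \PTdeg{} instance with node set
\[ V = U \cup \{p\} \cup \{s_1,\dots,s_m\},\qquad m=\lceil n'/c\rceil . \]
Here $p$ is a connecting node and the $s_j$ are choice nodes. Every universe node and $p$ has only the empty parent set. For each choice node $s_j$ and each tuple of at most $c$ pairwise disjoint sets $X_1,\dots,X_\ell\in\mathcal{S}$, we add the parent set $\{p\}\cup X_1\cup\dots\cup X_\ell$ with score $\ell$. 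The instance has $n=n'+1+m\le (1+1/c)n'+2$ nodes. The maximum parent set size is $k=ck'+1$, a constant. There are at most $m|\mathcal{S}|^{c}$ parent sets, which is polynomial.

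\textbf{Forward direction.} Suppose a partition into $t$ sets exists. Distribute the sets among the choice nodes, at most $c$ per node. This is possible when $t\le cm$, and we may assume this since $t\le n'$ sets would otherwise require adjusting $m$. Give each $s_j$ the union of its assigned sets plus $p$. The parent sets of different $s_j$ are disjoint except for the shared node $p$, so the skeleton is a union of stars glued at $p$, hence a forest. The total score is $t$. We ask whether the optimum is at least a target value, for example whether a partition with exactly $t$ sets exists, with the score encoding $t$.

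\textbf{Backward direction.} This is the crucial step. In any polytree, two choice nodes $s_j,s_{j'}$ both have $p$ as a parent. If they also shared a universe node, the skeleton would contain the cycle $s_j\,p\,s_{j'}\,x$. So the parent sets of distinct choice nodes intersect only in $p$. Within a single parent set the chosen sets are disjoint by construction. Hence the score equals the number of pairwise disjoint sets from $\mathcal{S}$ that are used.

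Decision versus optimization needs care. \textsc{Set Partitioning} asks for disjoint sets covering $U$, but the score counts sets rather than covered elements. I would therefore change the score of a tuple to the number of covered elements, $|X_1\cup\dots\cup X_\ell|$. Then the optimum is $n'$ if and only if $U$ can be partitioned. The number of sets $t$ can be ignored, or checked separately since the partitioning problem is typically posed without a bound on $t$. This scoring makes both directions clean.

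\textbf{Running time.} An algorithm running in $2^{(1-\delta)n}|\mathcal{I}|^{\OO(1)}$ would solve the partitioning instance in time
\[ 2^{(1-\delta)(1+1/c)n'}\,\mathrm{poly}(n'). \]
Choosing $c$ large enough relative to $\delta$ makes this $2^{(1-\delta')n'}$ with $\delta'>0$. This contradicts SCC for the $k'$ given by SCC for $\delta'$, with $k=ck'+1$ a constant.

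\textbf{Second claim (\PT{} without a degree bound).} The same reduction applies with parent set size unrestricted. Take the $k'$ supplied by SCC for the relevant $\epsilon$, and note that the instance size is still polynomial.

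The main obstacle is the bookkeeping in the last step: balancing $c$ against $\epsilon$ so that the $n'/c$ extra choice nodes only perturb the exponent by a factor $1+1/c$, while the number of parent sets stays polynomial for constant $c$ and $k'$.
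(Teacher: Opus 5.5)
There is a genuine gap in how you handle the cardinality bound $t$. The SCC lower bound for $k'$-\textsc{Set Partitioning} holds for the version with a bound $t$ on the number of parts, and the hardness depends on that bound. The paper obtains it by reducing from $k'$-\textsc{Set Cover} and replacing $\mathcal{S}$ by all subsets of its members. After this, every element that lies in some set is available as a singleton. Without $t$, every such instance is therefore trivially a yes-instance (or trivially a no-instance if some element is uncovered). So your remark that the partitioning problem is typically posed without a bound on $t$ does not apply here.

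In your construction there are $m=\lceil n'/c\rceil$ choice nodes, each able to take up to $c$ sets. The total capacity $cm\ge n'$ therefore imposes no restriction. With the covered-elements score, your \PTdeg{} instance has optimum $n'$ if and only if an exact cover with \emph{any} number of sets exists. SCC says nothing about that question, and on the hard instances coming from \textsc{Set Cover} it is trivial, so your running-time argument yields no contradiction. Checking $t$ separately, as you suggest, is not possible either: deciding whether a partition into at most $t$ sets exists is the whole problem.

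The fix, which is what the paper does, is to encode $t$ through the number and capacity of the choice nodes. Use $\lceil t/c\rceil$ choice nodes. Let all but the last accept $p$ together with the union of at most $c$ pairwise disjoint sets from $\mathcal{S}$, and let the last accept at most $t-c(\lceil t/c\rceil-1)$ such sets. Then a polytree in which every universe node is a parent of some choice node yields at most $t$ pairwise disjoint sets covering $U$. Conversely, any partition into at most $t$ sets can be distributed over the choice nodes in order.

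Since $t\le n'$, the node count is still at most $(1+1/c)n'+2$. Your bookkeeping and order of quantifiers therefore go through unchanged: fix $\epsilon$, choose $c$ and $\delta'$, take $k'$ from SCC, and set $k=ck'+1$. Your final scoring by covered elements, the argument that distinct choice nodes can share only $p$, and the treatment of \PT{} without a degree bound otherwise match the paper's proof.
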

\begin{proof}
  Take an arbitrary $k'$-\textsc{Set Partitioning} instance with universe $U$ with~$n'=|U|$, family of subsets $\mathcal{S} = \{S_1, S_2, \dots, S_m\}$ of $U$, and an upper bound~$t \le n'$ for the number of disjoint sets we can pick to cover the universe. For contradiction, suppose we can solve \PTdeg{} in time $\OO(2^{(1-\epsilon)n})$ for some fixed~$\epsilon > 0$. Without loss of generality, we can assume that~$1/\epsilon$ is an integer.

  We next define our reduction from $k'$-\textsc{Set Partitioning} to \PTdeg{}. 
For a visualization, we refer to \Cref{fig-scc}.  
  For each element of the universe~$u \in U$, create a node with the same label~$u$. Then, create~$t' \coloneqq \lceil \epsilon \cdot t \rceil \le \epsilon \cdot n' + 1$ nodes $s_1, s_2, \dots, s_{t'}$. Finally, create one additional node $p$. We call these \emph{universe nodes}, \emph{choice nodes}, and the \emph{connecting node}, respectively. In total we have $n = n' + \lceil \epsilon \cdot t \rceil + 1 \le (1 + \epsilon)n' + 2$ nodes. 
  
  For a positive integer~$c$, let $\mathcal{S}^c$ be the set
  \[ \Big\lbrace \{p\} \cup \bigcup_{i \in I} S_i \colon I \subseteq [m], |I| \le c, S_i \cap S_j = \emptyset \text{ for } i, j \in I \Big\rbrace \]
  be the family of unions of at most $c$ disjoint sets of $\mathcal{S}$ and~$\{p\}$. Let the only potential parent set of universe nodes and the connecting node be the empty set, with score $0$. For choice nodes $s_1, s_2, \dots, s_{t' - 1}$, let the potential parent sets be $\mathcal{S}^{1/\epsilon}$ with the score of a set $S \in \mathcal{S}^{1/\epsilon}$ being $|S|$. Define the potential parent sets similarly for $s_{t'}$ but with~$S \in \mathcal{S}^{t - (t' - 1)/\epsilon}$ to deal with the remainder. Since the subsets of $U$ are of size at most $k'$, the parent sets of the choice nodes are of size at most $k'/\epsilon$ and their total number is at most $(n')^{k'/\epsilon}$. The parent set size is thus at most a constant for when~$\epsilon$ and~$k'$ are fixed. The whole construction takes $\OO\left((n')^{k'/\epsilon + \OO(1)}\right)$ time, which is polynomial if both~$\epsilon$ and~$k'$ are constants.
  
  We claim that this instance of \PT{} has a solution of score $n'$ if and only if there is a solution to the $k'$-\textsc{Set Partitioning} instance. Suppose a solution to the latter exists with sets $S_{i_1}, S_{i_2}, \dots, S_{i_t}$ for indices $i_1, i_2, \dots, i_t \in [m]$. Then, construct a polytree where the parents of $s_j$ are
  \[ \{p\} \cup \bigcup_{l=(j-1)/\epsilon + 1}^{\min\{j/\epsilon, t\}} S_{i_{l}}\,. \]
  That is, the parents of $s_1$ are $p$ and those contained in~$S_{i_1}, S_{i_2}, \dots, S_{i_{1/\epsilon}}$, the parents of $s_2$ are $p$ and those contained in $S_{i_{1/\epsilon + 1}}, S_{i_{1/\epsilon + 2}}, \dots, S_{i_{2/\epsilon}}$, and so on.
  This polytree has a score of $n'$.

  For the other direction, assume instead that we have a polytree whose score is at least $n'$. We know that every choice node has to have $p$ as their parent, meaning that the choice nodes cannot share any other parents for the structure to be a polytree. Since the score of the polytree is at least $n'$, each universe node has a single child, which is a choice node. This also implies that the score is $n'$. By the construction of the parent sets of the choice nodes, we can straightforwardly convert the parent sets back into choices of subsets of the $k'$-\textsc{Set Partitioning} instance. It follows that the subsets are disjoint and cover the universe.

  Under SCC, there is a constant $k'$ such that we cannot solve $k'$-\textsc{Set Partitioning} in time $\OO(2^{(1 - \epsilon^2)n'})$. However, we can build a \PTdeg{} instance from any such $k'$-\textsc{Set Partitioning} instance with $n = (1 + \epsilon)n' + 2$ nodes such that the construction takes polynomial time and has a constant upper bound for the parent set size $k = k'/\epsilon$. We can then solve the instance in time
  $\OO(2^{(1 - \epsilon)((1 + \epsilon)n' + 2)}) = \OO(2^{(1 - \epsilon^2)n'})$. This violates SCC, resulting in a contradiction. The result follows for \PT{} identically, since we can just use the same local scores in our construction.
\end{proof}

\section{Approximation Algorithms}
\label{sec-approx}
\appendixsection{sec-approx}

In this section, we give our approximation algorithms for bounded in-degree graphs.
We start by showing that \PTdeg{} has a $(k+1)$-approximation.
We then show that a similar algorithm yields a 2-approximation for \PTdegadd{}.
Finally, we consider another restriction that limits the component size.

\begin{theorem}\label{thm:polytree-k-approx}
    \PTdeg{} admits a polynomial-time $(k+1)$-approximation algorithm.
\end{theorem}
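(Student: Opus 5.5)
We analyse the greedy algorithm described in the introduction. We may discard all pairs $(v,S)\in\mathcal{F}$ with $f_v(S)\le 0$: the empty parent set always has score $0$, and the empty arc set is a polytree, so such pairs never help. The algorithm starts from the empty arc set and maintains a partial assignment of parent sets. In each round it considers every node $v$ that has not yet been assigned a parent set, and every $S\in\mathcal{F}_v$ with $f_v(S)>0$ such that adding the arcs $S\times\{v\}$ keeps the skeleton a forest. Among these it picks the pair with maximum $f_v(S)$, fixes $D_v=S$, and repeats until no candidate remains. Each round costs $|\mathcal{I}|^{\OO(1)}$ time, since acyclicity of the skeleton can be checked with union--find, and there are at most $n$ rounds. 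The output is a polytree with $f(D)=\sum_v f_v(D_v)$.

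For the ratio we use a charging argument against an optimal polytree $\opt$. Let $(v_1,S_1),\dots,(v_r,S_r)$ be the greedy choices in order, with $g_j=f_{v_j}(S_j)$ non-increasing. Every node $w$ with $f_w(\opt_w)>0$ must be charged to some greedy step $j$ with $g_j\ge f_w(\opt_w)$. We look at the first step after which the pair $(w,\opt_w)$ ceases to be a feasible candidate, which must happen because the algorithm terminates. Up to that step, $(w,\opt_w)$ was available, so the greedy value chosen at that step is at least $f_w(\opt_w)$. The pair becomes infeasible for one of two reasons: greedy assigned a parent set to $w$ itself, or greedy added arcs that make $\opt_w\times\{w\}$ close a cycle in the skeleton. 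The first reason charges $w$ to that step, and each step receives at most one charge of this type.

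The main obstacle is the second reason: we must show each greedy step receives at most $k$ charges of this type. We use a matroid-style exchange argument on the graphic matroid. The arc sets $\opt_w\times\{w\}$ over distinct $w$ are pairwise disjoint sets of edges of a forest. A pair $(w,\opt_w)$ that is blocked after step $j$ but not before must have $\opt_w\cup\{w\}$ containing two vertices that step $j$ newly connected. Step $j$ adds at most $k$ edges, so it merges at most $k+1$ components into one and reduces the rank by at most $k$.

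Fix $j$ and let $B_j$ be the set of $w$ blocked at step $j$. Since the stars $\opt_w\times\{w\}$ form a forest, contracting each such star (all components of $\opt$-stars) in the forest of greedy edges after step $j-1$ is acyclic before step $j$ by unblockedness. To close cycles for all of $B_j$ simultaneously, the contraction of the stars of $B_j$ together with the $\le k$ new edges must cover them. A counting argument then gives $|B_j|\le k$: view each star of $B_j$ as merging the current components. These merges are independent among themselves, since $\opt$ is a forest and the stars were individually feasible before step $j$. After step $j$ each star is dependent, so the rank gained by the $k$ new edges must account for one lost degree of freedom per blocked star.

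If this exchange bound goes through, every step $j$ absorbs at most $k+1$ optimal nodes, each of value at most $g_j$. Hence $f(\opt)\le (k+1)\sum_j g_j=(k+1)f(D)$. I expect making the simultaneous-independence claim rigorous, via the partition of $\opt$'s arcs into stars and submodularity of the graphic matroid rank, to be the hardest part.
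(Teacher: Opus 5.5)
\textbf{The per-step bound $|B_j|\le k$ is false.} Your greedy algorithm, your charging rule and the monotonicity of $g_j$ are all fine. The step that fails is the claim that the blocked stars are "independent among themselves". Stars of $\opt$ that are each compatible with the greedy forest $D^{j-1}$ need not be jointly compatible. A cycle in $D^{j-1}$ together with several stars runs through greedy edges inside components, so it is not a cycle of $\opt$, and acyclicity of $\opt$ gives no contradiction.

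Here is a concrete counterexample. Let $k=2$, and let the only nonempty candidate sets be
\begin{itemize}
\item $f_z(\{w_1,w_2\})=f_{z'}(\{w_3,z\})=f_y(\{t_1,t_2\})=f_{y'}(\{t_3,y\})=100$,
\item $f_y(\{e_3\})=f_{y'}(\{e_4\})=99$,
\item $f_v(\{y',z'\})=11$, and $f_{w_i}(\{t_i\})=10$ for $i=1,2,3$.
\end{itemize}
Greedy first takes the four sets of score $100$. This yields one component containing $t_1,t_2,t_3$ and another containing the still unassigned $w_1,w_2,w_3$. Greedy then takes $D_v=\{y',z'\}$, which joins the two components. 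The optimum has score $439$; every assignment losing at most $1$ against the trivial bound $441$ contains a cycle. It assigns $\{w_1,w_2\},\{w_3,z\},\{e_3\},\{e_4\},\{y',z'\}$ to $z,z',y,y',v$ and $\{t_i\}$ to $w_i$. Each star $t_i\to w_i$ is feasible before step $5$ and blocked after it, while $w_i$ stays unassigned, so $|B_5|=3>k$. Step $5$ receives four charges of total value $41>(k+1)g_5=33$. Hence no per-step version of your charging can be proved. The slack built up inside components during earlier steps can be consumed all at once by a later merge.

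\textbf{What does hold is a cumulative bound, which is how the paper argues.} After $i$ steps, at most $i$ pairs $(w,\opt_w)$ are dead because $w$ has been assigned. For every other dead $w$, choose two vertices of $\opt_w\cup\{w\}$ lying in a common component $C$ of $D^i$, and call the pair a virtual edge. The virtual edges form a forest on $V$: a cycle of them, routed through the pairwise edge-disjoint stars of $\opt$, would be a closed trail in the skeleton of $\opt$. Hence $C$ contains at most $|C|-1$ virtual edges, so there are at most $|A(D^i)|\le ki$ of them in total. Therefore at most $(k+1)i$ nodes are charged to steps $1,\dots,i$. Let $n_j$ be the number of nodes charged to step $j$, let $N_i=n_1+\dots+n_i$, and let $g_{r+1}:=0$. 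Summation by parts then gives $\sum_w f_w(\opt_w)\le\sum_j n_jg_j=\sum_i N_i(g_i-g_{i+1})\le(k+1)\sum_i i(g_i-g_{i+1})=(k+1)f(D)$. The paper phrases the same fact as follows: the $(i+1)$-st greedy choice scores at least the $((k+1)i+1)$-st largest optimal local score.
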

\begin{proof}
  Consider a simple algorithm where we repeatedly add the highest scoring parent set $Q$ for a node $q$ such that (i) a parent set has not yet been chosen for the node $q$ and (ii) adding $Q$ would not violate the acyclicity property of the polytree. 
Recall that the empty parent set has score~0.
Hence, every non-trivial parent set added by this algorithm has score larger than~0.
Moreover, it is possible that the algorithm assigns the empty parent set to some nodes. 
  We next prove that this yields a $(k+1)$-approximation.

  Let $\opt$ be the optimal polytree. For convenience, write~$V = \{1, 2, \dots, n\}$ and $f_i \coloneqq f_i(\opt_i)$ for $i \in V$. Assume without loss of generality that $f_1 \le f_2 \le \dots \le f_n$. Let~$D^i$ be the graph after the $i$-th iteration of our algorithm, that is, the parent sets $Q_1, Q_2, \dots, Q_i$ have been chosen for the nodes $q_1, q_2, \dots, q_i$, respectively. 
  We will prove the claim by showing that for the first $\lceil n / (k+1) \rceil$ iterations $i$ we can add a parent set with score at least $f_{(k+1)i+1}$. The result then follows by observing that
  \begin{align*}
    f(\opt) \coloneqq \sum_{i=1}^n f_i
    &\le (k + 1) \cdot \sum_{i=1}^{\lceil n/(k + 1) \rceil} f_{(k+1)(i-1)+1}\\
    &\le (k + 1) \cdot f(D^n)\,.
  \end{align*}

  Observe that there are only two possible reasons for why the parent set~$\opt_j$ of node $j$ could not be added to $D^i$: Firstly, it can be that $j$ already has a fixed (possibly empty) parent set in $D^i$. This can occur only for $i$ nodes $j$. The second case is that $\opt_j \cup \{j\}$ has at least $2$ nodes $u$ and $v$ in some connected component~$C$ of~$D^i$. In the latter case, say that $u$ and $v$ are connected by a virtual edge from $\opt_j$. Note then that $C$ can have at most $|C| - 1$ virtual edges from $\opt_1, \opt_2, \dots, \opt_{ki}$, since otherwise $\opt$ would contain a cycle. The maximum number of parent sets that cannot be added to $D^i$ for the second reason is therefore bounded from above by the sum $\sum_{C} |C| - 1$ over the connected components $C$ of $D^i$, which attains its maximum $(k + 1)i$ if all sets $Q_{j'} \cup \{j'\}$ are disjoint for all $j' \in [i]$. Therefore, one of the parent sets $\opt_{j}$ for $j \in [(k+1)i+1]$ can be added on the $i$-th iteration, and so the added parent set has at least the score $f_{(k+1)i+1}$.
\end{proof}

\begin{figure}[t]
\centering
\includegraphics[width=0.3\textwidth]{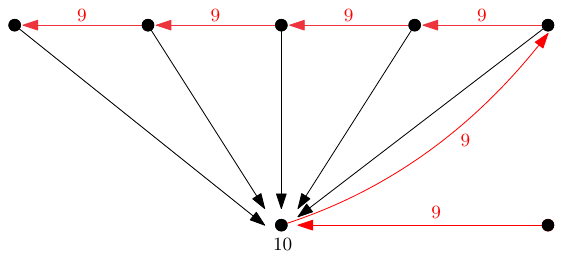}
\caption{
An example for a bad local optima where~$k=5$.
The polytree~$T$ consisting of the black arcs with score~10 is found by the greedy algorithm which always adds the parent set with largest score while preserving acyclicity. 
The polytree~$T^\star$ consisting of the \textcolor{red}{red} arcs has a score of~54 and non of the arcs of~$T^\star$ can be added to~$T$.
}
\label{fig-greedy-bad-local-optima}
\end{figure}

See \Cref{fig-greedy-bad-local-optima} for an example where the  approximation factor of~$k+1$ occurs.

Our second approximation algorithm concerns additive scores with an in-degree bound. 
Recall that \PTdegadd{} can be solved in polynomial time~\citet[Theorem~18]{GK26}.
Moreover, note that without the in-degree bound, the problem is also solvable in polynomial time \cite{Edmonds67}.
\begin{theorem}\label{thm:ptadd-two-approx}
    \PTdegadd{} admits a polynomial-time $2$-approximation algorithm.
\end{theorem}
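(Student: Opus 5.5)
We analyze the same greedy algorithm as in the proof of \Cref{thm:polytree-k-approx}: repeatedly pick an unassigned node $q$ and a parent set $Q$ with $|Q|\le k$ maximizing $f_q(Q)=\sum_{u\in Q} f_q(\{u\})$, subject to the skeleton staying a forest. Under additivity we may drop all arcs of nonpositive score, so every candidate set consists of positive arcs. The plan is a charging argument. The optimal value $f(\opt)$ is the sum of the weights $w(u,v)\coloneqq f_v(\{u\})$ of the arcs of $\opt$. We distribute these arcs over the greedy iterations and aim to show that the optimal arcs charged to iteration $i$ weigh at most $2 f_{q_i}(Q_i)$. Summing over $i$ then gives $f(\opt)\le 2f(D^n)$.

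At iteration $i$, call an arc $(u,v)$ of $\opt$ \emph{alive} if $v$ is still unassigned and $u$, $v$ lie in different components of $D^{i-1}$. Every optimal arc that ends up missing from the output dies at exactly one iteration. Let $B_i$ be the set of arcs that die at iteration $i$. It splits into two parts.
\begin{itemize}
\item[(a)] $B_i^{\mathrm{in}}$: the alive arcs entering $q_i$. They die because $q_i$ becomes assigned.
\item[(b)] $B_i^{\mathrm{cyc}}$: the alive arcs into other nodes whose endpoints become connected by adding $Q_i$.
\end{itemize}

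For (b) we use the exchange property of the graphic matroid. Adding the $|Q_i|$ edges of $Q_i\cup\{q_i\}$ merges at most $|Q_i|+1$ components. The edges of $\opt$ form a forest, so at most $|Q_i|$ of its alive edges can get both endpoints inside the merged component. We then want an injection from $B_i^{\mathrm{cyc}}$ into the arcs of $Q_i$, chosen so that each blocked arc is dominated by its image.

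For (a), the alive arcs into $q_i$ number at most $k$ (in-degree bound), and the heaviest of them would have been a feasible choice. The natural target is $w(B_i^{\mathrm{in}})\le f_{q_i}(Q_i)$, which would follow by optimality of $Q_i$ if $B_i^{\mathrm{in}}$ were jointly feasible.

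The main obstacle is that neither domination is automatic from greedy optimality over whole sets.
\begin{itemize}
\item In (a), the arcs of $B_i^{\mathrm{in}}$ are individually feasible but may jointly close a cycle, since two of their tails can lie in one component.
\item In (b), a single blocked arc $(u,v)$ is only bounded by $f_v(\{u\})\le f_{q_i}(Q_i)$. Summing this naively gives a factor $|Q_i|$ rather than $2$.
\end{itemize}

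To fix both, I will first refine the accounting to the arc level. I will view the greedy choice of $Q_i$ as adding its arcs one by one in decreasing weight order. I will also note that, because the score is additive, for each component the best feasible parent set of $q_i$ takes at most one tail per component, namely its heaviest one. This makes the subset of $B_i^{\mathrm{in}}$ that uses one tail per component feasible. The remaining arcs of $B_i^{\mathrm{in}}$, those sharing a component with a heavier kept arc, then behave exactly like cycle-blocked arcs and are moved into (b).

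For (b), I would match each blocked optimal edge to the greedy arc whose addition created the cycle, using the standard greedy-on-matroid argument. When the $j$-th heaviest arc of $Q_i$ is added, at most $j$ optimal edges have been blocked so far. Any such edge of weight larger than that arc could have replaced it, or been added together with the heavier arcs as a better set for its own head, contradicting maximality of the greedy choice in that iteration. Making this last exchange rigorous is where I expect the real work, because the blocked edge has a different head than $q_i$. There the comparison must go through the maximality of $f_{q_i}(Q_i)$ over all nodes rather than arc-by-arc.

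If this works, we get $w(B_i^{\mathrm{cyc}})\le f_{q_i}(Q_i)$ and $w(B_i^{\mathrm{in}})\le f_{q_i}(Q_i)$, and hence the factor $2$.
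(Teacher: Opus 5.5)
\textbf{There is a genuine gap, and it lies in the choice of algorithm rather than in the bookkeeping.} The parent-set greedy you analyze is not a $2$-approximation for \PTdegadd{}, so the inequality $w(B_i^{\mathrm{cyc}})\le f_{q_i}(Q_i)$ you hope to prove is false.

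Here is a counterexample. Take nodes $q,u_1,\dots,u_k$ with $f_q(\{u_j\})=1+\varepsilon$ for all $j$, $f_{u_{j+1}}(\{u_j\})=k$ for $j<k$, and all other arcs scoring $0$.
\begin{itemize}
\item The first greedy choice is $(q,\{u_1,\dots,u_k\})$, with score $k(1+\varepsilon)>k$.
\item Afterwards all $u_j$ lie in one component and $q$ is frozen. Nothing of positive score can be added, so the output scores $k(1+\varepsilon)$.
\item The path $u_1\to u_2\to\dots\to u_k\to q$ is a polytree of score $k(k-1)+1+\varepsilon$.
\end{itemize}
The ratio tends to $k-1+1/k$, which exceeds $2$ for every $k\ge 3$. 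Here $B_1^{\mathrm{cyc}}$ consists of $k-1$ arcs of weight $k$ each. The exchange you sketch yields no contradiction: each blocked arc is heavier than every arc of $Q_1$, yet its head $u_{j+1}$ has no parent set scoring more than $k<f_q(Q_1)$. Set-level maximality only gives the per-arc bound $w(e)\le f_{q_i}(Q_i)$. The factor $|Q_i|$ you call naive is therefore genuinely attained by this algorithm. The paper explicitly notes this kind of bad local optimum (its Figure for $k=5$) as the reason not to add whole parent sets.

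\textbf{The missing idea is to change the algorithm itself, not just the accounting.} Greedily add a \emph{single} arc of maximum score such that the skeleton stays a forest and the head's in-degree stays at most $k$. A node is not frozen after it receives its first arc. This is greedy over the intersection of two matroids: the graphic matroid of the skeleton and the partition matroid bounding in-degrees. The analysis then becomes a short count. After $i$ greedy arcs:
\begin{itemize}
\item the forest constraint blocks (or has already used) at most $\sum_C(|C|-1)= i$ arcs of $\opt$;
\item the in-degree constraint blocks at most $i$ arcs of $\opt$, since each saturated head carries $k$ greedy arcs and at most $k$ optimal ones.
\end{itemize}
Sort the arcs of $\opt$ by nonincreasing score as $e_1,e_2,\dots$. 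Then one of $e_1,\dots,e_{2i+1}$ is still addable, so the $(i+1)$-st greedy arc scores at least $f(e_{2i+1})$. Summing gives $f(\opt)\le 2f(D)$. Your one-tail-per-component fix for (a) and your arc-by-arc view of (b) point in this direction, but they only close up once the algorithm itself proceeds arc by arc.
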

\begin{proof}
    We proceed almost analogously to \Cref{thm:polytree-k-approx}, but instead of adding the whole parent set we greedily add the best edge that does not violate acyclicity or in-degree constraints.
This is necessary to avoid bad local optima as shown in \Cref{fig-greedy-bad-local-optima}.

    Let $\opt$ be again the optimal polytree and $D^i$ the graph after the $i$-th iteration of our algorithm. Further, let $e_1, e_2, \dots, e_{|E(\opt)|}$ be the edges of $\opt$ such that their scores $f(e_j) \coloneqq f_v(\{u\})$ with $e_j = (v, u)$ are decreasing in $j$. We argue that there is always an edge with score at least $f(e_{2i+1})$ that can be added.

    The two reasons why one of the hightest-scoring edges could not be added are either that the two endpoints of the edge are in the same connected component of $D^i$ or the head of the edge already has $k$ neighbors.
    
    As before, each connected component $C$ of $D^i$ prevents at most $|C| - 1$ edges from being added, and the worst case occurs when all the endpoints of the edges in $D^i$ are distinct, preventing at most $i$ edges from being added.
    
    Further, for each node of $D^i$ with in-degree $k$, the optimal polytree~$\opt$ can contain at most $k$ edges that cannot be added to $D^i$ because of the in-degree constraint. Hence, in-degree constraints prevent at most $i$ edges from being added.
    Overall, at least on one of the edges $e_1, e_2, \dots, e_{2i+1}$ can be added to $D^i$, completing the claim. 
\end{proof}

\begin{figure}[t]
\centering
\includegraphics[width=0.35\textwidth]{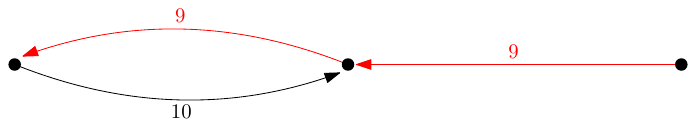}
\caption{
An example for a bad local optima where~$k=1$.
The polytree~$T$ consisting of the black arc with score~10 is found by the greedy algorithm which always adds the edge with largest score while preserving acyclicity and maximum in-degree~1. 
The polytree~$T^\star$ consisting of the \textcolor{red}{red} arcs has a score of~18 and non of the arcs of~$T^\star$ can be added to~$T$.
}
\label{fig-fig-2-approx-additive}
\end{figure}

We refer to \Cref{fig-fig-2-approx-additive} for an example that factor~2 is tight.

\begin{figure}[t]
\centering
\includegraphics[width=0.3\textwidth]{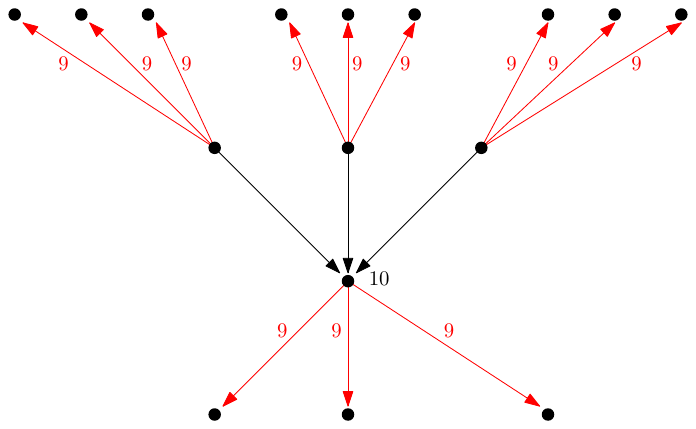}
\caption{
An example for a bad local optima for the where~$q=3$.
The polytree~$T$ consisting of the black arcs with score~10 is found by the greedy algorithm behind \Cref{thm:polytree-k-approx} which always adds the edge with largest score while preserving acyclicity and maximum component size~$3$. 
The polytree~$T^\star$ consisting of the \textcolor{red}{red} arcs has a score of~108 and non of the arcs of~$T^\star$ can be added to~$T$.
}
\label{fig-compo-bad-algo}
\end{figure}

We now aim to show polynomial time approximation algorithms for further restrictions of \PT.
More precisely, by \PTcomp{} we denote the version in which we require that each connected component of the polytree consists of at most $q$~arcs. 
Intuitively, one could think that the simple greedy algorithm behind \Cref{thm:polytree-k-approx} which in each step adds the highest scoring parent set which does not violate acyclicity (and now which also does not violate the size constraint) also achieves a good approximation guarantee for \PTcomp{}.
However, there exist examples (see \Cref{fig-compo-bad-algo} in the appendix) where this only yields a $\OO(q^2)$~approximation.
In order to obtain an approximation factor which is linear in~$q$ we modify the greedy procedure slightly. 
  For an example where the greedy algorithm behind \Cref{thm:polytree-k-approx} achieves only a factor $\OO(q^2)$~approximation, we refer to \Cref{fig-compo-bad-algo}.

\begin{theorem}
\label{thm-pt-comp}
\PTcomp{} admits a polynomial-time $2q$-approximation algorithm.
\end{theorem}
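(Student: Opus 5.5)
Our plan is to modify the greedy procedure of \Cref{thm:polytree-k-approx} so that it never merges components: in each iteration we add the highest-scoring parent set $Q$ for a node $q$ that has no parent set yet, subject to the condition that $Q \cup \{q\}$ is disjoint from the node sets of all non-trivial components built so far. Equivalently, every chosen "star" $\{q\} \cup Q$ becomes its own component of $D^i$, with $|Q| \le q$ arcs (a parent set with more than $q$ arcs is infeasible anyway and can be dropped from the input). The output is therefore trivially a polytree satisfying the component bound, and every selected parent set has positive score.

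For the analysis we follow the charging argument of \Cref{thm:polytree-k-approx}. Sort the nodes by their optimal local scores $f_1 \ge f_2 \ge \dots$ (descending now). We will show that in iteration $i$ the algorithm can still add some $\opt_j$ with $j \le 2q(i-1)+1$, so its score is at least $f_{2q(i-1)+1}$. Summing over iterations and grouping the optimal scores into blocks of size $2q$ then gives $f(\opt) \le 2q \cdot f(D)$.

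The key step is counting how many optimal parent sets can be blocked by a single chosen star $C = \{q_i\} \cup Q_i$, which has at most $q+1$ nodes. A set $\opt_j$ is blocked if $j$ already has a parent set, or if $\opt_j \cup \{j\}$ meets some chosen star. The main obstacle is bounding the second type by $2q$ per star rather than by something depending on $k$. We plan to use the component structure of $\opt$. The optimal "stars" $\{j\} \cup \opt_j$ that meet $C$ lie in components of $\opt$ that contain a node of $C$. Each such component has at most $q$ arcs, and in it the number of non-empty parent sets is at most the number of arcs. The problem is that up to $q+1$ such components may be touched, which naively gives $q(q+1)$. We will refine this by charging per node of $C$ only the stars containing that node: a node $x$ lies in the star of $x$ itself and in the stars of its children in $\opt$. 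Summed over $x \in C$, this is at most $|C|$ plus the number of arcs of $\opt$ leaving $C$.

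If this bound is still too weak, the fallback is to have the greedy choose, among all feasible stars, one maximizing score, and to exchange against the whole optimal components intersecting $C$. Within a single optimal component, the at most $q$ nonempty stars each score at most the chosen star's value at the time they were available, and there are at most $2q$ blocked optimal stars when we count nodes of $C$ (at most $q+1$) plus the target $q_i$ itself. We expect settling this count precisely to be the delicate part. Once it is established, the telescoping sum is routine.
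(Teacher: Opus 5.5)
Your proposal has a genuine gap at the step you flag as delicate. With candidates ranked by total score $f_v(D_v)$, no charging argument can give $2q$, because the algorithm itself is only an $\Omega(q^2)$-approximation.

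Here is a concrete instance. Let node $c$ have candidate parent set $\{x_1,\dots,x_q\}$ with score $1$. For each $x\in\{c,x_1,\dots,x_q\}$, add $q$ fresh nodes whose only candidate parent set is $\{x\}$, with score $1-\delta$.
\begin{itemize}
\item Your greedy picks $c$'s parent set first, forming the star $C$.
\item Afterwards every remaining candidate star contains a node of $C$, so it is blocked. This holds under your disjointness rule, and also under the plain component bound, since $C$ already has $q$ arcs.
\item The output scores $1$.
\item The $q+1$ out-stars centred at $c,x_1,\dots,x_q$ form a feasible solution, each component having $q$ arcs, of score $q(q+1)(1-\delta)$.
\end{itemize}

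This is exactly what your refined count reveals. The number of optimal stars meeting $C$ is $|C|$ plus the number of optimal arcs leaving $C$. Since each of the $q+1$ nodes of $C$ can lie in a different optimal component with $q$ arcs, this can be $q(q+1)$. Your descending-order block argument needs each chosen star to block at most $2q$ optimal parent sets, which is false. The fallback count (nodes of $C$ plus $q_i$ itself) does not fix this either: it gives $q+2$, not $2q$, and ignores that several optimal components meet $C$.

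The missing idea has two parts: normalize per arc, and charge whole optimal components rather than individual parent sets.
\begin{enumerate}
\item The paper's greedy picks the feasible parent set maximizing $\omega_v(D_v)=f_v(D_v)/|D_v|$.
\item For each optimal component $D^i_{\text{opt}}$, let $D_{v_i}$ be the first greedy parent set that touches it. Just before that step, every parent set of $D^i_{\text{opt}}$ was still feasible. Hence $\omega_{v_i}(D_{v_i})$ is at least the per-arc weight of every arc of $D^i_{\text{opt}}$.
\item Since that component has at most $q$ arcs, $f(D^i_{\text{opt}})\le q\,\omega_{v_i}(D_{v_i})$.
\item Optimal components are node-disjoint, so a greedy set with $|D_v|+1$ nodes is the first toucher of at most $|D_v|+1$ of them.
\item Summing gives $f(\opt)\le q\sum_v (|D_v|+1)f_v(D_v)/|D_v|\le 2q\,f(D)$.
\end{enumerate}

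Your disjointness restriction is harmless and compatible with this argument, since the stars of an untouched optimal component are disjoint from all chosen stars. In the instance above, per-arc ranking gives $C$ weight $1/q$, so the greedy takes the singleton sets of weight $1-\delta$ first. What must change is the total-score ranking and the per-parent-set counting.
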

\begin{proof}
We need the following notation to describe our algorithm:
For each parent set~$D_v\in\mathcal{F}$ we define~$\omega_v(D_v):=f_v(D_v)/|D_v|$, that is, $\omega_v(D_v)$ is a measure for the score contribution of a single arc.
For empty parent sets, let $\omega_v(\emptyset) = f_v(\emptyset) =0$ for all $v \in V$. 
Moreover, for each~$e_{D_v}\in D_v$ we set~$\omega_v(e_{D_v})=\omega_v(D_v)$. 
Now, consider a simple algorithm where we repeatedly add the highest scoring parent set $W$ according to~$(\omega_v(\cdot))_{v\in V}$ for a node~$w$ such that (i) a parent set has not yet been chosen for the node~$w$, (ii) adding $W$ would not violate the acyclicity property of the polytree, and (iii) adding~$W$ does not violate the size bound of any connected component. 
Note that similar to the proof of \Cref{thm:polytree-k-approx} since each empty parent set has score~0, every non-trivial parent set added by this algorithm has score larger than~0.
Moreover, it is also possible for this algorithm to assign the empty parent set to a node.
We next prove that this yields a $2q$-approximation.

By~$D_{\text{opt}}$ we denote the optimal polytree and let $D_{\text{opt}}^1, \ldots, D_{\text{opt}}^\ell$ be the connected components  of~$D_{\text{opt}}$.
Moreover, let~$D$ be the polytree computed by our greedy algorithm and let~$D^j$ be the polytree after the $j$-th iteration of the greedy algorithm. 
We say that polytree~$D^j$ \emph{touches} a connected component~$D_{\text{opt}}^i$ of~$D_{\text{opt}}$ if any node of~$V(D_{\text{opt}}^i)$ is a head or a tail of an arc of $D^j$.
Also, we say component~$D_{\text{opt}}^i$ is \emph{touched} if at least one arc of~$D$ touches~$D_{\text{opt}}^i$.

Note that each connected component~$D_{\text{opt}}^i$ of~$D_{\text{opt}}$ with more than a single node is touched:
Assume towards a contradiction that~$D_{\text{opt}}^i$ is not touched.
By definition, $D_{\text{opt}}^i$ contains at least 1~arc and thus consists of at least 1~parent set.
Now, let $D_v$ be any parent set of~$D_{\text{opt}}^i$.
Note that~$D_v$ can be  added to~$D$, a contradiction to the termination condition of our greedy algorithm.
Hence, each connected component of $D_{\text{opt}}$ with multiple nodes is touched, and if an isolated node $v$ is not touched, then also $D$ has an empty parent set for $v$.

For each touched connected component~$D_{\text{opt}}^i$ of~$D_{\text{opt}}$ let~$\alpha_i$ be the smallest index such that polytree~$D^{\alpha_i}$ touches~$D_{\text{opt}}^i$. 
Let~$D_{v_i}$ be the parent set which is added by the greedy algorithm to obtain~$D^{\alpha_i}$ from~$D^{\alpha_i-1}$.
Moreover, let~$T_w$ be any parent set of~$D_{\text{opt}}^i$.
Note that both~$D_{v_i}$ and~$T_w$ are valid candidate parent sets for the $\alpha_i$-th iteration of the greedy algorithm.
Thus, $\omega_{v_i}(D_{v_i})\ge \omega_w(T_w)$.
We let~$f(D_{\text{opt}}^i):=\sum_{x\in V(D_{\text{opt}}^i)}f_x(T_x)$ where~$T_x$ is the parent set of~$x$ in~$D_{\text{opt}}^i$
In other words, $f(D_{\text{opt}}^i)$ is the score of~$D_{\text{opt}}^i$. 
Observe that~$f(D_{\text{opt}}^i)=\sum_{e^i_{\text{opt}}\in D_{\text{opt}}^i} \omega_{p(e^i_{\text{opt}})}(e^i_{\text{opt}})$, where~$p(e^i_{\text{opt}})$ is the head of arc~$e^i_{\text{opt}}$.
By the above we have~$\omega_{v_i}(D_{v_i})\ge\omega_{p(e^i_{\text{opt}})}(e^i_{\text{opt}})$ for each~$e^i_{\text{opt}}\in D_{\text{opt}}^i$.
Since by definition each component~$D_{\text{opt}}^i$ has at most $q$~arcs, we have~$q\cdot \omega_{v_i}(D_{v_i})\ge f(D_{\text{opt}}^i)$.
We denote this as the \emph{component property}.

Observe that the score of the optimal solution can be decomposed into the scores of each of its connected components:
  \begin{align}
    f(\opt) &= \sum_{i=1}^\ell f(D_{\text{opt}}^i) \\
    		&\le \sum_{i=1}^\ell \omega_{v_i}(D^{v_i})\cdot q \label{eq-comp-2}
    		\ =\  q\cdot \sum_{i=1}^\ell \omega_{v_i}(D_{v_i}) \\
    		&\le q\cdot \sum_{v\in V} (|D_v|+1)\cdot \omega_{v}(D_v) \label{eq-comp-4} \\
    		&= q\cdot \sum_{v\in V} (|D_v|+1)\cdot \nicefrac{f_{v}(D_v)}{|D_v|} \label{eq-comp-5} \\
    		&\le 2q\cdot\sum_{v\in V}f_v(D_v) \ = \ 2q\cdot f(D). \label{eq-comp-6} 
  \end{align}
Equation~\ref{eq-comp-2} follows by the component property. 
For Equation~\ref{eq-comp-4}, observe that each~$\omega(D^{v_i})$ can be counted at most $(|D_{v_i}|+1)$~times: once for each endpoint of an arc in the parent set~$D_{v_i}$. 
Moreover, recall that each component~$D_{\text{opt}}^i$ is touched.
Hence, by summing over all nodes and counting each parent set $(|D_{v_i}|+1)$~times, Equation~\ref{eq-comp-4} is true.
Equation~\ref{eq-comp-5} follows by the definition of~$\omega(D_{v_i})$.
Finally, Equation~\ref{eq-comp-6} follows by that fact that~$\nicefrac{|D_{v_i}|+1}{|D_{v_i}|}$ is maximized if~$|D_{v_i}|=1$.
Hence, we obtain a $2q$~approximation.
\end{proof}

\begin{figure}[t]
\centering
\includegraphics[width=0.4\textwidth]{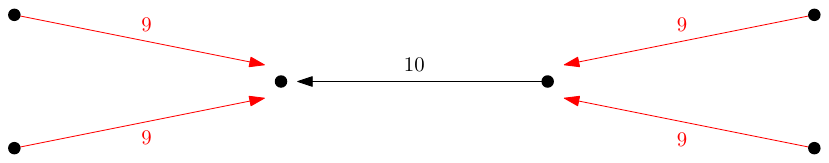}
\caption{
An example for a bad local optima where~$k=2$.
The polytree~$T$ consisting of the black arcs with score~10 is found by the greedy algorithm which always adds the parent set~$D_v$ which maximizes~$f_v(D_v)/|D_v|$ while preserving acyclicity and the size~$k$ restriction for each component. 
The \textcolor{red}{red} numbers represent the number~$f_v(D_v)/|D_v|$.
The polytree~$T^\star$ consisting of the \textcolor{red}{red} arcs has a score of~36 and non of the two parent sets of~$T^\star$ can be added to~$T$. 
}
\label{fig-compo}
\end{figure}

For an example where the approximation factor of~$2k$ occurs, we refer to \Cref{fig-compo}.

\section{Inapproximability}
\label{sec-in-approx}
\appendixsection{sec-in-approx}

Let us start by recalling some results on hardness of approximation of independent sets, that is, subsets of nodes without edges between them.

\begin{theorem}[\cite{Hastad96}]\label{thm:inapprox-general}
    There is no polynomial-time $\OO(n^{1-\epsilon})$-approximation algorithm for \textsc{Maximum Independent Set} in graphs of $n$ nodes for any $\epsilon > 0$ unless $\textup{P} = \textup{NP}$.
\end{theorem}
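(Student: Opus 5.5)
This statement is imported rather than proved in the paper; the citation is to \citet{Hastad96}. The original argument concluded only under $\textup{NP} \not\subseteq \textup{ZPP}$, and the derandomized form stated here, with the hypothesis $\textup{P} \neq \textup{NP}$, is due to Zuckerman (2007). The plan below reconstructs that proof in the standard way.

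\textbf{Step 1: PCPs with small amortized free-bit complexity.} First I would fix the PCP machinery. The key ingredient is Håstad's theorem that for every $\delta > 0$, $\textup{NP}$ has PCP verifiers with
\begin{itemize}
\item perfect completeness,
\item $r = \OO(\log n)$ random bits,
\item amortized free-bit complexity at most $\delta$.
\end{itemize}
Amortized free-bit complexity at most $\delta$ means the following. Let $f$ be the number of free bits, so that for each random string there are at most $2^f$ accepting answer patterns. Let $s$ be the soundness error. Then $f / \log_2(1/s) \le \delta$.

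\textbf{Step 2: FGLSS graph.} Next I would apply the FGLSS reduction. Build a graph whose nodes are pairs (random string, accepting answer pattern). Join two nodes by an edge when they are inconsistent, i.e.\ they assign different values to the same proof position. Note that the usual FGLSS graph puts edges between consistent pairs and looks for cliques; taking inconsistent pairs instead turns cliques into independent sets directly. With this choice:
\begin{itemize}
\item Independent sets correspond to consistent partial proofs.
\item Completeness gives an independent set of size $2^r$.
\item Soundness bounds every independent set by $s \cdot 2^r$.
\item The graph has $N \le 2^{r+f}$ nodes.
\end{itemize}
The resulting gap is $1/s$, which equals $N^{1-\OO(\delta)}$ once the parameters are balanced. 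This uses the amortized bound $f \le \delta \log_2(1/s)$.

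\textbf{Step 3: gap amplification.} To make $\log(1/s)$ large relative to $r$, I would amplify soundness by running the verifier many times. The amplified verifier must keep the randomness at $\OO(\log n) + \OO(\log(1/s))$ rather than growing linearly with the number of repetitions. Zuckerman's contribution is to do this deterministically, choosing the repeated random strings along the edges of an explicit disperser or extractor. This keeps both completeness and the amortized ratio essentially intact, and it removes the need for randomness, which is what yields the conclusion under $\textup{P} \neq \textup{NP}$.

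\textbf{Main obstacle.} The hardest part is Step 1: constructing the low-free-bit PCP via long-code tests and Fourier analysis. Amplification with explicit dispersers in Step 3 is secondary but also delicate. For the present paper, I would simply invoke these results from the literature rather than reprove them.
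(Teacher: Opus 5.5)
Like the paper, which gives no proof and only cites the result, you treat this as an imported theorem, and that is the right call. Your attribution remark is correct and sharpens the paper's citation. H{\aa}stad's argument gives $n^{1-\epsilon}$ only under $\textup{NP}\not\subseteq\textup{ZPP}$, and deterministically it gives $n^{1/2-\epsilon}$. So the statement as written, and with it the $\textup{P}\neq\textup{NP}$ part of \Cref{thm-inapprox-pt}, rests on Zuckerman's derandomization. Steps~1 and~2 of your sketch are accurate.

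There is one gap, in the target you set in Step~3: as stated, it would not deliver $n^{1-\epsilon}$. Take a verifier with randomness $R$, free bits $F$ and soundness $s$. Its FGLSS graph has $N\le 2^{R+F}$ nodes and gap $1/s$. So the exponent $\log_2(1/s)/(R+F)$ is $1-\OO(\delta)$ exactly when $R\le(1+\OO(\delta))\log_2(1/s)$ and $F=\OO(\delta)\log_2(1/s)$. In other words, the constant hidden in your $\OO(\log(1/s))$ must be $1+\OO(\delta)$.

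With any larger constant the target is too weak. Expander-walk amplification meets your target, but each step spends $\log_2 d$ random bits. Since by Alon--Boppana the second eigenvalue is at least about $2/\sqrt{d}$, the hitting bound cuts the soundness by only about a factor $\sqrt{d}$ per step. This gives $R\approx 2\log_2(1/s)$ and hence only $N^{1/2-\OO(\delta)}$, which is essentially the deterministic bound available before Zuckerman.

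What the disperser buys is exactly the leading constant. Start from a base verifier with randomness $r_0$, free bits $f_0\le\delta\log_2(1/s_0)$ and soundness $s_0$. Take $G\colon\{0,1\}^R\times[D]\to\{0,1\}^{r_0}$ such that every $X\subseteq\{0,1\}^R$ with $|X|\ge 2^k$ has more than $s_0 2^{r_0}$ neighbours. Accept iff the base verifier accepts on all of $G(x,1),\dots,G(x,D)$. Then:
\begin{itemize}
\item perfect completeness survives (this is where it is genuinely needed);
\item the soundness drops below $2^{k-R}$;
\item the free bits become $Df_0$;
\item the exponent $(R-k)/(R+Df_0)$ is $1-\OO(\delta)$ provided $k=\OO(\delta R)$ and $D=\OO(R/\log(1/s_0))$.
\end{itemize}
The nontrivial ingredient is therefore an explicit disperser whose degree is linear in $R$ at entropy rate $\OO(\delta)$, which is what Zuckerman constructs. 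A degree polynomial in $R$ would blow up the free bits and ruin the bound.
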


\begin{theorem}[\cite{Austrin09}]\label{thm:inapprox-degree}
    There is no polynomial-time $\OO(d/\log^2 d)$-approximation algorithm for \textsc{Maximum Independent Set} in graphs of maximum degree $d$ unless the Unique Games Conjecture is false.
\end{theorem}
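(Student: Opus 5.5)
This statement is a cited result, and I would not reprove it from scratch in the paper. If a self-contained argument were required, I would follow the Unique Games framework of Austrin, Khot and Safra. That is, I would reduce a Unique Games instance $(G,[R],\{\pi_e\})$ to a bounded-degree graph $H$ via a dictatorship test. The test would be designed so that two things hold. In the completeness case, when the UG instance is almost satisfiable, $H$ has an independent set of relative size $\mu_c$. In the soundness case, when at most a tiny fraction of constraints can be satisfied, every independent set has relative size at most $\mu_s$. The target is $\mu_c/\mu_s = \Omega(d/\log^2 d)$.

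\textbf{Construction.} For each UG vertex $v$ I would place a cloud $\{v\}\times[q]^R$, weighted by the uniform (or suitably biased) product measure. Here $q$ is chosen as a function of $d$, roughly $q \approx d/\log d$.
\begin{itemize}
\item For a constraint $(v,w)$ with bijection $\pi$, I would join $(v,x)$ to $(w,y)$ when $x$ and $y\circ\pi$ are ``negatively correlated'' in most coordinates. Concretely, they differ in every coordinate except a small noise fraction.
\item The resulting weighted graph has huge degree. I would then sparsify it by sampling each test edge independently with probability chosen so that the expected degree is $d$.
\item Vertices whose degree exceeds $d$ are deleted. A Chernoff bound together with a union bound over the (exponentially many, but controlled) relevant sets shows that every set of measure $\ge \mu_s$ still spans an edge.
\end{itemize}

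\textbf{Completeness.} Given a good labeling $L$, take the dictator sets $\{(v,x): x_{L(v)}=1\}$. These have measure $1/q$ and span no edges on satisfied constraints, since such edges force disagreement in the label coordinate. A standard cleanup removes the vertices touched by violated constraints and by noisy coordinates, losing only a negligible fraction.

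\textbf{Soundness (the main obstacle).} Suppose some independent set $I$ has measure $\mu_s \gg (\log^2 d)/d$. Averaging over vertices gives, at many $v$, an indicator $f_v$ of mean about $\mu_s$ with $\langle f_v, T f_w\rangle = 0$ across constraints. Here $T$ is the negative-correlation noise operator. I would argue by contradiction with low influence:
\begin{itemize}
\item If no coordinate has noticeable low-degree influence, then by the invariance principle (Mossel–O'Donnell–Oleszkiewicz) the quantity $\langle f, Tg\rangle$ is close to its Gaussian analogue.
\item Borell's isoperimetric theorem then says that halfspaces minimize this correlation. A direct Gaussian tail computation shows that two halfspaces of measure $\mu_s$ under correlation $\rho\approx -1/(q-1)$ have positive correlation once $\mu_s \gtrsim (\log q)^2/q$. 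This contradicts independence.
\item Hence many $f_v$ must have influential coordinates. Decoding labels from these coordinates, as in Khot–Regev, satisfies a noticeable fraction of constraints, which contradicts soundness.
\end{itemize}
Getting the precise $\log^2 d$ loss requires a careful Gaussian estimate in the small-measure regime, combined with keeping the sparsification step from adding more than constant factors. I expect this step to be the hardest part.
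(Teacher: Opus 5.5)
The paper gives no proof of this statement; it imports it from Austrin, Khot and Safra, so deferring to the citation is exactly what the paper does. Your added sketch, however, would not give the bound, because its parameters are inconsistent. With $q\approx d/\log d$, your dictator sets have relative size $\mu_c=1/q\approx \log d/d$. The soundness threshold you state ($\log^2 d/d$, or $(\log q)^2/q$) is \emph{larger} than $1/q$, so $\mu_c/\mu_s<1$. No sharper soundness analysis can rescue this choice. Every graph of maximum degree $d$ has an independent set of relative size at least $1/(d+1)$, so $\mu_s\ge 1/(d+1)$ and therefore $\mu_c/\mu_s\le (d+1)/q=O(\log d)$.

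The Gaussian step is also misdescribed. For $\rho<0$, two halfspaces of measure $\mu$ always have negative covariance, and $\Pr[X\in A,Y\in A]$ is strictly positive for every $\mu>0$. Nothing ``becomes positive'' at a threshold in $\mu$. The threshold comes from the sparsification, which costs an inherent logarithm rather than a constant factor. Sample about $dN/2$ test edges and union-bound over the at most $(e/\mu)^{\mu N}$ sets of relative size $\mu$. This destroys independence only for sets whose dense edge weight is $\gtrsim \mu\log(1/\mu)/d$, so the best achievable soundness is $\mu_s=\Theta(\log d/d)$.

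The Gaussian input you actually need is the lower bound $\underline{\Gamma}_\rho(\mu)\gtrsim\mu^{2/(1+\rho)}=\mu^2\cdot\mu^{2|\rho|/(1-|\rho|)}$. This is $\Omega(\mu^2)$, and so beats $\mu\log(1/\mu)/d$ once $\mu\ge C\log d/d$, precisely when $|\rho|\log(1/\mu)=O(1)$. That forces $q=\Omega(\log d)$, and taking $q=\Theta(\log d)$ maximizes completeness. Equivalently, you can use a $p$-biased cube with $p=\Theta(1/\log d)$ and $\rho=-p/(1-p)$, which equals $-1/(q-1)$ for $p=1/q$. Then $\mu_c=\Theta(1/\log d)$ and $\mu_s=\Theta(\log d/d)$, and the two logarithms in $d/\log^2 d$ come from these two sources. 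In that regime, the rest of your outline (invariance principle, Borell, Khot--Regev-style decoding of influential coordinates) has the right shape.
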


The following result follows directly by combining the inapproximability result of \textsc{Maximum Independent Set} with a reduction of \citet[Theorem~2]{Gruttemeier21}.

\begin{figure}[t]
\centering
\includegraphics[width=0.47\textwidth]{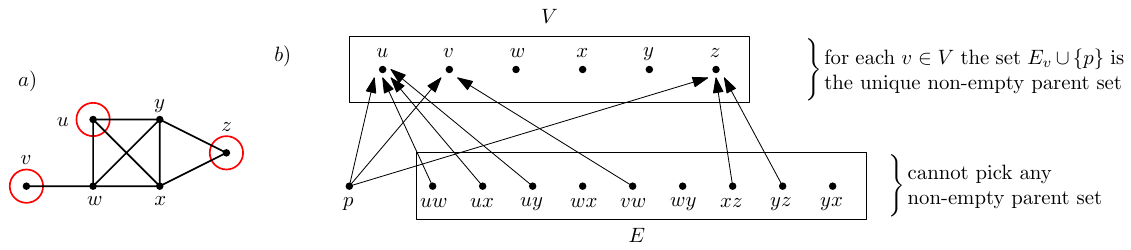}
\caption{
Visualization of the reduction of \Cref{thm-inapprox-pt}.
An  \textsc{Maximum Independent Set} instance is shown in~a) where a n independent set of size~3 is shown in \textcolor{red}{red}. 
Moreover, b) shows the constructed \PT{} instance where only the nodes corresponding to the independent set have a non-empty parent set. 
}
\label{fig-apprx-lb}
\end{figure}

\begin{theorem}
\label{thm-inapprox-pt}
    There is no polynomial-time $\OO(n^{1-\epsilon})$-approximation algorithm for \PT{} for any $\epsilon > 0$ unless $\textup{P} = \textup{NP}$. Further, there is no polynomial-time $\OO(k / \log^2 k)$-approximation algorithm for \PTdeg{} unless the Unique Games Conjecture is false.
\end{theorem}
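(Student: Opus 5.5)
We give an approximation-preserving reduction from \textsc{Maximum Independent Set} to \PT{}, in the spirit of \citet[Theorem~2]{Gruttemeier21}, and then invoke \Cref{thm:inapprox-general} and \Cref{thm:inapprox-degree}. Let $G=(V_G,E_G)$ have $n_G$ nodes and maximum degree $d$. For every vertex $v\in V_G$ we create a node $v$, and for every edge $e=\{u,w\}\in E_G$ we create an edge node $x_e$. The edge nodes and connecting nodes only get the empty parent set, with score $0$. Each vertex node $v$ has exactly one non-trivial potential parent set $P_v$, with score $f_v(P_v)=1$.

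The key gadget must make $P_u$ and $P_w$ jointly infeasible whenever $uw\in E_G$, while leaving independent choices jointly feasible. We set $P_v \coloneqq \{x_e : e\ni v\}\cup\{p\}$, where $p$ is one extra connecting node.
\begin{itemize}
\item If $uw\in E_G$, choosing both $P_u$ and $P_w$ creates the cycle $u,x_{uw},w,p$ in the skeleton.
\item If $I$ is independent, the sets $P_v\setminus\{p\}$ for $v\in I$ are pairwise disjoint, since no edge node is shared. The skeleton is then a set of stars glued only at $p$, hence a tree.
\end{itemize}
So polytrees of score $s$ correspond exactly to independent sets of size $s$. Both directions are immediate, and the correspondence is constructive in polynomial time.

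The main obstacle is controlling the parameters, so that the approximation ratio transfers in the right units.
\begin{itemize}
\item \textbf{General case.} The \PT{} instance has $N=n_G+|E_G|+1=\OO(n_G^2)$ nodes. An $\OO(N^{1-\epsilon})$ ratio would then only give $\OO(n_G^{2-2\epsilon})$, which is useless. I would first try to avoid the blow-up by instead letting the parent set of $v$ be its neighbors together with $p$, i.e.\ $P_v=N_G(v)\cup\{p\}$. Adjacent $u,w$ then give the cycle $u,w,p$, but independent vertices can share neighbors, which also creates skeleton cycles.

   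If that fails, I would use the standard fact that \textsc{Maximum Independent Set} remains $n^{1-\epsilon}$-hard after padding the instance with isolated vertices. Concretely, I would pick $\delta$ small and add $n_G^{c}$ isolated vertices; their empty parent sets contribute no score. This makes $N$ and the padded vertex count polynomially related with exponent close to $1$, so the ratio transfers up to an arbitrarily small loss in $\epsilon$. This step needs the most care.

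\item \textbf{Bounded degree.} We get $k=|P_v|=d+1$. An $\OO(k/\log^2k)$-approximation for \PTdeg{} would therefore yield an $\OO(d/\log^2 d)$-approximation for independent sets of maximum degree $d$, contradicting \Cref{thm:inapprox-degree}. This part is straightforward once the gadget is in place.
\end{itemize}
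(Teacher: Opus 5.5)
\textbf{There is a genuine gap, in the general case.} Your gadget $P_v=\{x_e : e\ni v\}\cup\{p\}$, the correspondence between polytrees and independent sets, and the transfer $k=d+1$ for the bounded-degree claim are exactly the paper's proof. That part is fine.

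You have correctly identified the problem with the general claim. The \PT{} instance has $N=n_G+|E_G|+1$ nodes, which can be of order $n_G^2$. So an $\OO(N^{1-\epsilon})$-approximation only gives an $\OO(n_G^{2-2\epsilon})$-approximation for independent sets. The paper's proof passes over this by tacitly identifying the number of nodes of the \PT{} instance with that of $G$.

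Your padding repair does not help:
\begin{itemize}
\item If the padded nodes only have the empty parent set, the optimum stays at $\alpha(G)$ and only $N$ grows. The ratio measured in $N$ gets weaker, not stronger.
\item If instead the padded vertices count towards the independent set, then $\alpha$ of the padded graph is dominated by the padding. Such instances are trivially approximable, so there is no $n^{1-\epsilon}$-hardness left to inherit.
\end{itemize}
Any gain would require making $N$ smaller relative to $n_G$, not larger.

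In fact no repair is possible, because \PT{} has a polynomial-time $\OO(\sqrt n)$-approximation. Let $k=\lfloor\sqrt n\rfloor$ and output the better of two solutions:
\begin{itemize}
\item the single highest-scoring parent set, which alone forms a star and hence a polytree; call its score $A$;
\item the greedy algorithm of \Cref{thm:polytree-k-approx} run only on candidate parent sets of size at most $k$; call its score $B$.
\end{itemize}
An optimal polytree has at most $n-1$ arcs. So fewer than $\sqrt n$ of its nodes have more than $k$ parents, and their total score is at most $\sqrt n\,A$. Replacing those parent sets by $\emptyset$ leaves a polytree of in-degree at most $k$. Its score is at most $(k+1)B$, since the blocking argument of \Cref{thm:polytree-k-approx} never uses that $k$ is constant. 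Hence $f(\opt)\le \sqrt n\,A+(\sqrt n+1)B\le(2\sqrt n+1)\max\{A,B\}$.

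Consequently, for $\epsilon\le 1/2$ the first sentence of the statement can hold only if P$=$NP, and no argument can establish it. What your (and the paper's) reduction actually proves, using $N\le n_G^2$, is weaker: \PT{} admits no polynomial-time $\OO(n^{1/2-\epsilon})$-approximation unless P$=$NP. The algorithm above shows this is tight up to $\epsilon$.
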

\begin{proof}
  We first prove the result for \PT{}. 
For a visualization of our construction, we refer to \Cref{fig-apprx-lb}.   
  Take an arbitrary instance $G = (V', E)$ for \textsc{Maximum Independent Set}. For a node $v \in V'$, let $E_v$ be the set of edges whose endpoint $v$ is.
  Construct then a \PT{} instance with the node set $V = V' \cup E \cup \{p\}$, where $p$ is a new auxiliary node, and define the local score functions such that the only non-zero scores are $f_v(E_v \cup \{p\}) = 1$ for all $v \in V'$. Call a polytree \emph{reasonable} if there are no nodes with a non-empty parent set with local score $0$. We claim that the score of the optimal polytree is precisely the size of the largest independent set, and further, there is a bijection between reasonable polytrees and independent sets.

  Let $I$ be an independent set. Consider the reasonable polytree $D$ such that all other parent sets are empty except for all $v \in I$, which have the parent set $E_v \cup \{p\}$. By definition, there are no edges between any two nodes $v, w \in V'$ in $D$. Further, the degree of every node $v \in E$ in $D$ is at most $1$, since $I$ is an independent set. Therefore, there are no cycles and the score clearly matches the size of $I$.

  Take then any reasonable polytree $D$ and let $I$ be the set of nodes with a non-empty parent set. Suppose that $I$ is not an independent set of $G$.
  Thus, there are nodes $u,v\in I$ that are connected by an edge $e$ in $G$. Then, however, $D$ has to contain a cycle with edges $u \leftarrow e \rightarrow v \leftarrow p \rightarrow u$, resulting in a contradiction. Since the size of $I$ equals the number of nodes with parents in $D$, its size matches the score of $D$.

  Therefore, if we can compute a $c$-approximation of an optimal polytree in polynomial time, we also obtain a $c$-approximation of the maximum-size independent set.
  To extend the result for \PTdeg{}, simply note that the maximum in-degree $k$ in the graph is one greater than the maximum degree $d$ of $G$. A $\OO(k / \log^2 k)$-approximation of the optimal polytree would then again yield a $\OO(d / \log^2 d)$-approximation of the maximum-size independent set.  
  By \Cref{thm:inapprox-general} and \Cref{thm:inapprox-degree}, the ability to compute such approximations would imply that P$=$NP and that the Unique Games Conjecture is false. 
\end{proof}

Similar techniques show that the approximation algorithms of \citet{Ziegler08} for BNs are near-optimal under Unique Games Conjecture by using the reduction of \citet[Theorem~32]{Gruttemeier22parameterized}. Recall the definition of the general problem:

\probdef{Bayesian network structure learning (BNSL)}
{Node set $V$, local score functions $f_v$ for $v \in V$.}
{DAG $D = (V, A)$ with maximum score.}

\begin{corollary}[\appref{thm-inapprox-bnsl}]
\label{thm-inapprox-bnsl}
  There is no polynomial-time $\OO(n^{1-\epsilon})$-approximation algorithm for \BNSL{} for any $\epsilon > 0$ unless $\textup{P} = \textup{NP}$. Further, there is no polynomial-time $\OO(k / \log^2 k)$-approximation algorithm for \BNSLdeg{} unless the Unique Games Conjecture is false.
\end{corollary}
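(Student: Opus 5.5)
We reduce from \textsc{Maximum Independent Set} via an approximation-preserving reduction, exactly as in \Cref{thm-inapprox-pt}, but with a gadget that forces \emph{directed} cycles rather than cycles in the skeleton. Given $G=(V',E)$, I would build a \BNSL{} instance on $V = V' \cup E$ (the reduction of \citet[Theorem~32]{Gruttemeier22parameterized} is of this type). Each vertex node $v\in V'$ gets the single non-trivial parent set $E_v$ with score $f_v(E_v)=1$. Each edge node $e=\{u,w\}\in E$ gets two possible parent sets, $\{u\}$ and $\{w\}$, both with score $0$, and the empty set also has score $0$. For the $k$-bounded version I would keep this, since the in-degree is then $\max(d,1)$. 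If the exact gadget of the cited reduction differs, I would adapt it, but the target property is the same: it should create a directed cycle exactly when two adjacent vertices both receive their parent sets.

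This gadget alone does not force any cycle, since edge nodes may keep empty parent sets. So I would instead make the score of an edge node positive only with a vertex parent, or else encode the conflict directly. The cleanest approach I expect is to create the edge node $e$ for each edge, let each $v$ have the only non-trivial parent set $E_v\cup\{p_v\}$, and have $e$ forced as a child of both endpoints. A simpler alternative that certainly works is a direct two-cycle construction. For each $v\in V'$, the only non-trivial parent set is $N_G(v)$ with score $1$, and there are no other nodes. Then if $u,v$ are adjacent and both take non-empty parent sets, we have $u\to v$ and $v\to u$, a directed cycle. Conversely, if $I$ is independent, the parents of the nodes of $I$ lie outside $I$, and those parents have empty parent sets, so every arc goes from $V'\setminus I$ into $I$ and the graph is acyclic. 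Hence reasonable DAGs (those with no zero-score non-empty parent sets) correspond bijectively to independent sets, and the score equals $|I|$. Any DAG can be made reasonable without losing score by emptying its zero-score parent sets.

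It follows that a $c$-approximation for \BNSL{} yields a $c$-approximation for \textsc{Maximum Independent Set} on a graph with the same number of nodes $n=|V'|$. \Cref{thm:inapprox-general} then gives the $\OO(n^{1-\epsilon})$ bound. For \BNSLdeg{} the in-degree bound is $k=d$, so \Cref{thm:inapprox-degree} gives the $\OO(k/\log^2 k)$ bound under UGC.

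The main obstacle is verifying acyclicity in the independent-set direction, which the bipartite argument (arcs only go into $I$, from nodes with no parents) settles. A secondary check is that the instance size is polynomial, which holds because each node has only one non-trivial parent set.
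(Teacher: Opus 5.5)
Your final construction is the paper's proof: $V = V'$, with $N_G(v)$ as the only non-trivial parent set of $v$ (score $1$), so that two adjacent vertices with non-empty parent sets form the directed $2$-cycle $u \to v \to u$. Your acyclicity argument for the other direction (every arc goes from $V'\setminus I$ into $I$) and the transfer with $n=|V'|$ and $k=d$ are correct. The edge-node gadgets in your first two paragraphs do not force any cycle, as you observe yourself, so drop them. The one loose end, which the paper shares, is that an isolated vertex would need $f_v(\emptyset)=1$, contradicting the normalization $f_v(\emptyset)=0$. Deleting isolated vertices beforehand fixes this, since it preserves any approximation ratio $c \ge 1$.
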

\begin{proof}
  The proof is nearly identical to \Cref{thm-inapprox-pt} with the exception that our vertex set $V$ is the same as in the \textsc{Maximum Independent Set} instance $G = (V, E)$ and the local scores $f_v(N_G(v)) = 1$, where $N_G$ is the set of neighbors of $v$ in $G$. Reasonable DAGs $D$ cannot have two non-empty parent sets $D_u$ and $D_v$ such that $\{u, v\} \in E$, since then there would be a cycle $u \leftarrow v \leftarrow u$.
\end{proof}

Finally, we note that a slight modification suffices also for \PTcomp{}.
\begin{corollary}[\appref{thm-inapprox-comp}]
\label{thm-inapprox-comp}
  There is no polynomial-time $\OO(q / \log^2 q)$-approximation algorithm for \PTcomp{} unless the Unique Games Conjecture is false.
\end{corollary}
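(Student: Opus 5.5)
The plan is to adapt the reduction in the proof of \Cref{thm-inapprox-pt} so that the polytree constraint which encodes independence is replaced by the component-size bound. I would then invoke \Cref{thm:inapprox-degree}.

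\textbf{Construction.} Take a \textsc{Maximum Independent Set} instance $G=(V',E)$ with maximum degree $d \ge 1$.
\begin{compactenum}
\item Build a \PTcomp{} instance on the node set $V' \cup E \cup P$, where $P$ is a set of fresh padding nodes.
\item For each $v \in V'$, add $d-|E_v|$ private padding nodes $P_v \subseteq P$.
\item Let the only non-zero local score be $f_v(E_v \cup P_v)=1$ for each $v\in V'$, so that every such parent set has exactly $d$ elements. All other nodes have only the empty parent set, with score $0$.
\item Set the component bound to $q \coloneqq d$.
\end{compactenum}
Unlike \Cref{thm-inapprox-pt}, there is no auxiliary node $p$: independence is now enforced by the size bound rather than by acyclicity. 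The construction is clearly polynomial.

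\textbf{Correspondence.} As before, call a solution reasonable if no node has a non-empty parent set of score $0$. I claim reasonable feasible solutions correspond exactly to independent sets $I$, with score $|I|$.

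\emph{From an independent set to a solution.} Give each $v\in I$ the parent set $E_v \cup P_v$. For distinct $u,v\in I$, the sets $E_u$ and $E_v$ are disjoint because $I$ is independent, and the padding sets are private. Hence each chosen $v$ forms its own component, a star with exactly $d=q$ arcs. This is a polytree and satisfies the bound.

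\emph{From a solution to an independent set.} Suppose two adjacent nodes $u,v$ both receive non-empty parent sets. Then the shared edge node $e=\{u,v\}$ is a parent of both, so $u$ and $v$ lie in one component. That component contains at least $2d > q$ arcs, which violates the bound. So the nodes with non-empty parent sets form an independent set, and the score equals its size.

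\textbf{Conclusion.} A polynomial-time $\OO(q/\log^2 q)$-approximation for \PTcomp{} would therefore yield an $\OO(d/\log^2 d)$-approximation for \textsc{Maximum Independent Set} on graphs of maximum degree $d$. By \Cref{thm:inapprox-degree}, this contradicts the Unique Games Conjecture.

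The main subtlety is the degree normalisation. Without padding, a low-degree node adjacent to another low-degree node could fit two stars into a single component of at most $q$ arcs. Forcing every parent set to have size exactly $q$ makes any merger exceed the bound. The degenerate case $d=0$ is trivial, since then every vertex can be taken.
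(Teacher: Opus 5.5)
Your proof is correct and takes essentially the same route as the paper. You drop the auxiliary node $p$, pad each $E_v$ with private dummy nodes to a common size $d$, and use the component bound so that two adjacent chosen vertices, which share an edge node, land in one oversized component; then you apply the bounded-degree hardness of Austrin et al. The only difference is that the paper sets $q = d+1$, whereas your $q = d$ makes $2d > q$ hold for every $d \ge 1$ under the arc-count definition (with $q = d+1$ one needs $d \ge 2$), which does not affect the asymptotic bound.
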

\begin{proof}
  The proof is again nearly identical to \Cref{thm-inapprox-pt}. We reduce from a \textsc{Maximum Independent Set} instance $G = (V', E)$ with maximum degree $d$, and let $q = d + 1$ and $V = V' \cup E$. Define $E_v$ again as the set of edges whose endpoint $v$ is, but and while $|E_v| < d$, add dummy nodes to the graph and to $E_v$. Then, define local scores $f_v(E_v) = 1$ for all $v \in V'$.
  Any reasonable polytree $D$ cannot have two non-empty parent sets $D_u$ and $D_v$ such that $\{u, v\} \in E$, since each reasonable parent set has size $q - 1$, but $u$ and $v$ are also in the same component, making its size exceed~$q$.
\end{proof}

\section{Conclusion}

We provided exponential-time algorithm for \PT{} and \PTdeg{} with running times $3^n |\mathcal{I}|^{\OO(1)}$ and~$(2+\epsilon)^n |\mathcal{I}|^{\OO(1)}$ for an arbitrary~$\epsilon$, respectively. 
We showed that the latter result is almost tight under SCC.
We then initiated the study of approximation algorithms for polytrees.
We showed that \PTdeg{} and \PTdegadd{} have polynomial-time approximation algorithms with factors~$(k+1)$ and~2, respectively.
We showed that under UG the former result is almost tight.

An interesting open question is whether our 2-approximation for \PTdegadd{} can be significantly improved. 
Moreover, it is interesting to see whether the ideas of the FPT-approximation of \citet{Kundu24approx} can be transferred to polytrees.
Also, it is open whether an algorithm faster than~$3^n\cdot |\mathcal{I}|^{\OO(1)}$ for \PT{} is possible.

Currently, learning BNs and polytrees is quite well understood.
One the one hand, BNs provide the most general structure (unbounded treewidth) but their inference is NP-hard.
On the other hand, inference for polytrees can be done in polynomial time since they have treewidth~1.
Hence, it is natural to study a generalization of polytrees which allows more structures but still allows for an efficient inference task.
One option is to limit the treewidth~$\tw$ of the learned network by a constant. 
Recall that~$\tw=1$ corresponds to polytrees.
It is interesting to develop exact and approximation algorithm for these networks as well.

\section*{Acknowledgements}

\emph{Juha Harviainen:} 
Co-funded by the Research Council of Finland, Grant 351156.
Co-funded by the European Union (ERC, SCALEBIO, 101169716). Views and opinions expressed are however those of the author(s) only and do not necessarily reflect those of the European
Union or the European Research Council. Neither the European Union nor the granting authority
can be held responsible for them.

\includegraphics[width=0.4\linewidth]{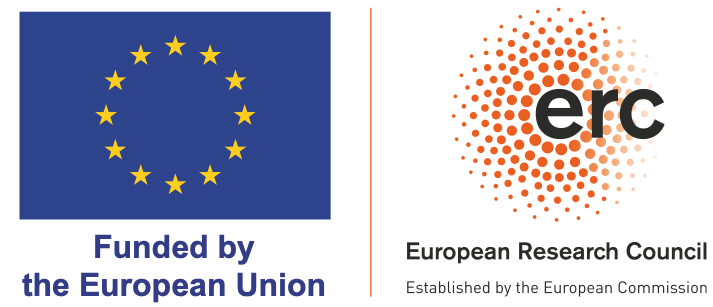}

\emph{Frank Sommer:}  Supported by the Alexander von Humboldt Foundation and partially by the Carl
Zeiss Foundation, Germany, within the project ``Interactive Inference''.

 We also gratefully acknowledge support by the TU Wien International Office for hosting Juha Harvianen in Vienna.

\section*{Impact Statement}

This paper presents work whose goal is to advance the field of Machine
Learning. There are many potential societal consequences of our work, none
which we feel must be specifically highlighted here.

\bibliographystyle{icml2026}


\end{document}